\font\dsrom=dsrom10 scaled 1200  
\newtheorem{remark}[theorem]{Remark}
\renewcommand{\P}{\mathbb{P}}
\newcommand{\indicatrice}{\textrm{\dsrom{1}}} 
\newcommand{\R}{{\mathbb R}}
\newcommand{\EE}{{\mathbb{E}}}
\newcommand{\NN}{{\mathbb N}}
\title{Error bounds for small jumps of L\'evy processes} 
\author{El Hadj Aly Dia\thanks{Universit\'e Paris-Est, Laboratoire d'Analyse et de Math\'ematiques Appliqu\'ees, UMR CNRS 8050, 5 bd. Descartes, Champs-sur-Marne, $77454$ Marne-la-Vall\'ee, France ({\tt dia.eha@gmail.com}).}}
\date{}
\begin{document}

\maketitle
  
\begin{abstract}
The pricing of options in exponential L\'evy models amounts to the computation
of expectations of functionals of L\'evy processes. In many situations, Monte-Carlo methods are used.
However, the simulation of a L\'evy process with infinite L\'evy measure generally requires either to truncate small jumps
or to replace them by a Brownian motion with the same variance. We will derive bounds for the errors generated by these two types of approximation. 
\end{abstract}
\begin{keywords} 
Approximation of small jumps, L\'evy processes, Skorokhod embedding, Spitzer identity
\end{keywords}
\begin{AMS}
60G51, 65N15
\end{AMS}
\begin{JEL}
C02, C15
\end{JEL}
\pagestyle{myheadings}
\thispagestyle{plain}
\markboth{E. H. A. DIA}{ERROR BOUNDS FOR SMALL JUMPS OF LEVY PROCESSES}
\section{Introduction}
\label{sec:intro}
In the recent years, the use of general L\'evy processes in financial models has grown extensively (see \cite{nielsen,cgmy,eberlein01}). A variety of numerical methods have been subsequently developed, in particular methods based on Fourier analysis (see \cite{broadie-yamamoto,feng-linetsky08,feng-linetsky,petrella-kou}). Nonetheless, in many situations, Monte-Carlo methods have to be used. The simulation of a L\'evy process with infinite L\'evy measure is not straightforward, except in some special cases like the Gamma or Inverse Gaussian models. In practice, the small jumps of the L\'evy process are either just truncated or replaced by a Brownian motion with the same variance (see \cite{asmussen-rosinski,cont-tankov,cont-voltchkova05,rydberg,signahl}). The latter approach was introduced by Asmussen and Rosinski \cite{asmussen-rosinski}, who showed that, under suitable conditions, the normalized cumulated small jumps asymptotically behave like a Brownian motion. 

The purpose of this article is to derive bounds for the errors generated by these two methods of approximation in the computation of functions of L\'evy processes at a fixed time or functionals of the whole path of L\'evy processes. We also derive bounds for the cumulative distribution functions. These bounds can be used to determine which type of approximations to use, since replacing small jumps by Brownian is more time-consuming (if we use Monte Carlo methods). Our bounds can be applied to derive approximation errors for lookback, barrier, American or Asian options. But this latter point will not be developed, and is left to another paper.

The characteristic function of a real L\'evy process $X$ with generating triplet $(\gamma,b^2,\nu)$ is given by
\begin{equation*}
  \EE e^{iuX_t}=\exp\left\{t\left(i\gamma u-\frac{b^2u^2}{2}+\int_{-\infty}^{+\infty}\left(e^{iux}-1-iux\indicatrice_{|x|\leq1}\right)\nu(dx)\right)\right\},
\end{equation*} 
where $\gamma\in\R$, $b\geq0$, and $\nu$ is a L\'evy measure. The process $X$ is the independent sum of a drift term $\gamma t$, a Brownian component $b B_t$, and a compensated jump part with L\'evy measure $\nu$. The process $X$ has finite (resp. infinite) activity if $\nu(\R)<\infty$ (resp. $\nu(\R)=+\infty$).

For $0<\epsilon\leq1$, the process $X^{\epsilon}$ is defined by
\begin{equation*}
X^{\epsilon}_t=\gamma t +b B_t+\sum_{0\leq s\leq t}\Delta X_s\indicatrice_{\left\{\left|\Delta X_s\right|>\epsilon\right\}}-t\int_{\epsilon<|x|\leq1}x\nu(dx).
\end{equation*}
The process $X^{\epsilon}$ is obtained (from $X$) by subtracting the compensated sum of jumps not exceeding $\epsilon$ in absolute value. Let
\begin{equation}\label{reps} 
R^{\epsilon}=X-X^{\epsilon}.
\end{equation}
The process $R^{\epsilon}$ is a L\'evy process with characteristic function
\begin{equation*}
  \EE e^{iuR^{\epsilon}_t}=\exp\left\{t\int_{|x|\leq\epsilon}\left(e^{iux}-1-iux\right)\nu(dx)\right\}.
\end{equation*} 
It holds $\EE\left(R^{\epsilon}_t\right)=0$ and $\mbox{Var}\left(R^{\epsilon}_t\right)=\sigma(\epsilon)^2t$, where
\begin{eqnarray*}
\sigma(\epsilon)=\sqrt{\int_{|x|\leq\epsilon}x^2\nu(dx)}.
\end{eqnarray*} 
Note that $\lim_{\epsilon\rightarrow0}\sigma(\epsilon)=0$. The behavior of $\sigma(\epsilon)$ when $\epsilon$ goes to $0$ is known for classical models (VG, NIG, CGMY...). As noted in Example 2.3 of \cite{asmussen-rosinski}, if $\nu(dx)=|x|^{-1-\alpha}L(x)dx$,
where $\alpha\in(0,2)$ and $L$ is slowly varying at $0$ , then it holds $\sigma(\epsilon)\sim \left(\left(L(-\epsilon)+L(\epsilon)\right)/(2-\alpha)\right)^{1/2}\epsilon^{1-\alpha/2}$; consequently, $\lim_{\epsilon \rightarrow 0}\sigma(\epsilon)/\epsilon=+\infty$.

We also define the process $\hat{X}^{\epsilon}$ by
\begin{equation*}
\hat{X}^{\epsilon}_t=X^{\epsilon}_t+\sigma(\epsilon)\hat{W}_t, \ t\geq0,
\end{equation*}
where $\hat{W}$ is a standard Brownian motion independent of $X$. We aim to study the behavior of the errors made by replacing $X$ by $X^{\epsilon}$ or $\hat{X}^{\epsilon}$, with respect to the level $\epsilon$. These errors are studied for the process $X$ at a fixed date and for its running supremum. Set, for any $t\geq0$,
\begin{equation*}
M_t=\sup_{0\leq s\leq t}X_s, \ M^{\epsilon}_t=\sup_{0\leq s\leq t}X^{\epsilon}_s, \ \hat{M}^{\epsilon}_t=\sup_{0\leq s\leq t}\hat{X}^{\epsilon}_s.
\end{equation*}
Unless stated otherwise, $X$ is a L\'evy process with generating triplet $(\gamma,b^2,\nu)$.

The paper is organized as follows. In the next section, we will study the errors resulting from the truncation of the compensated sum of small jumps. The results of that section are based on estimates for the moments of $R^\epsilon$. We also derive an estimate for
the expectation $\EE\left(M_t -M_t^{\epsilon}\right)$, by using Spitzer's identity. In Section~\ref{sec:brownianapproximation} we study the errors resulting from Brownian approximation. The process $X$ will be approximated by the process $\hat{X}^{\epsilon}$. A major result of Section~\ref{sec:brownianapproximation} is Theorem 2, which states an error bound for the expectation of a function of the supremum. This result is the consequence of Theorem~\ref{plongement}, which relies on the Skorohod embedding theorem.
\section{Truncation of the compensated sum of small jumps}
\label{sec:jumptruncation}
In this section, we will study the errors resulting from the approximation of $X$ by $X^{\epsilon}$. These errors are related to the moments of $R^{\epsilon}$. Define
\begin{equation}
\sigma_0(\epsilon)=\max\left(\sigma(\epsilon),\epsilon\right).
\end{equation}\label{sigma0}
The next result will be useful for many proofs in this paper.
\begin{proposition}\label{Rnprop}
Let $X$ be a L\'evy process and $R^{\epsilon}$ defined in \eqref{reps}. Then
\begin{eqnarray*}
\EE\left|R_{t}^{\epsilon}\right|^4&=&t\int_{|x|\leq\epsilon}x^4\nu(dx)+ 3\left(t\sigma(\epsilon)^2\right)^2,
\end{eqnarray*}
and for any real $q>0$
\begin{eqnarray*}
\EE\left|R_{t}^{\epsilon}\right|^{q}&\leq&K_{q,t}\sigma_0(\epsilon)^{q},
\end{eqnarray*}
where $K_{q,t}$ is a positive constant which depends only on $q$ and $t$.
\end{proposition}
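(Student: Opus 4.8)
The plan is to exploit that $R^\epsilon$ is a centered, pure-jump Lévy process (it inherits no Brownian part, since the term $bB_t$ belongs to $X^\epsilon$) whose cumulants can be read directly off the characteristic exponent. Writing $c_k=\int_{|x|\le\epsilon}x^k\nu(dx)$, the fundamental observation is the elementary bound
\begin{equation*}
\left|c_k\right|\le\int_{|x|\le\epsilon}|x|^k\nu(dx)\le\epsilon^{k-2}\int_{|x|\le\epsilon}x^2\nu(dx)=\epsilon^{k-2}\sigma(\epsilon)^2,\qquad k\ge2,
\end{equation*}
valid because $|x|\le\epsilon$ on the domain of integration. This simultaneously shows that every $c_k$ is finite, hence that $R^\epsilon_t$ has moments of all orders and that its characteristic function is analytic near the origin.

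For the exact fourth moment I would expand the exponent $\log\EE e^{iuR^\epsilon_t}=t\int_{|x|\le\epsilon}(e^{iux}-1-iux)\nu(dx)=t\sum_{k\ge2}\frac{(iu)^k}{k!}c_k$, which identifies the cumulants of $R^\epsilon_t$ as $\kappa_1=0$ and $\kappa_k=tc_k$ for $k\ge2$; in particular $\kappa_2=t\sigma(\epsilon)^2$. Since $\kappa_1=0$, the moment--cumulant relation reduces to $\EE(R^\epsilon_t)^4=\kappa_4+3\kappa_2^2$, and as $R^\epsilon_t$ is real the fourth absolute moment equals $\EE(R^\epsilon_t)^4$. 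This gives exactly $t\,c_4+3(t\sigma(\epsilon)^2)^2$, the first claim.

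For the inequality I would first treat $0<q\le2$ by Jensen's inequality, $\EE|R^\epsilon_t|^q\le(\EE|R^\epsilon_t|^2)^{q/2}=(t\sigma(\epsilon)^2)^{q/2}\le t^{q/2}\sigma_0(\epsilon)^q$, since $\sigma(\epsilon)\le\sigma_0(\epsilon)$. For $q>2$ I would reduce to an even integer: choose $m\in\NN$ with $2m\ge q$ and bound $\EE(R^\epsilon_t)^{2m}$ through the moment--cumulant formula, which expresses it as a finite sum over set partitions of $\{1,\dots,2m\}$ of products $\prod_i\kappa_{k_i}$, each block having size $k_i\ge2$ (blocks of size one drop out because $\kappa_1=0$) and $\sum_i k_i=2m$. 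Applying the key bound to each factor, a partition into $r$ blocks contributes at most $t^r\,\epsilon^{\,2m-2r}\sigma(\epsilon)^{2r}=t^r(\epsilon^2)^{m-r}(\sigma(\epsilon)^2)^r\le t^r\sigma_0(\epsilon)^{2m}$, using $\max(\epsilon^2,\sigma(\epsilon)^2)\le\sigma_0(\epsilon)^2$ and $r\le m$. Summing over the (finitely many) partitions yields $\EE(R^\epsilon_t)^{2m}\le K_{2m,t}\sigma_0(\epsilon)^{2m}$ with a constant depending only on $m$ and $t$.

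Finally, Lyapunov's inequality transfers this to arbitrary real $q$: $\EE|R^\epsilon_t|^q\le(\EE|R^\epsilon_t|^{2m})^{q/(2m)}\le K_{2m,t}^{q/(2m)}\sigma_0(\epsilon)^q$, so one may take $K_{q,t}=K_{2m,t}^{q/(2m)}$. The only delicate point is the $q>2$ case: the argument hinges on the per-cumulant estimate $|c_k|\le\epsilon^{k-2}\sigma(\epsilon)^2$ and on the observation that it is precisely the quantity $\sigma_0(\epsilon)=\max(\sigma(\epsilon),\epsilon)$---rather than $\sigma(\epsilon)$ alone---that absorbs the mixed powers $(\epsilon^2)^{m-r}(\sigma(\epsilon)^2)^r$ uniformly in the partition structure; keeping track of the combinatorial constant (a Bell-type count) is routine once this is in place.
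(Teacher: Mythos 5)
Your proof is correct and follows essentially the same route as the paper: identify the cumulants $\kappa_k=t\int_{|x|\le\epsilon}x^k\nu(dx)$, use the moment--cumulant relation with $\kappa_1=0$ for the exact fourth moment, bound each cumulant via $|\kappa_k|\le t\,\epsilon^{k-2}\sigma(\epsilon)^2\le t\,\sigma_0(\epsilon)^k$, and reduce general $q$ to even integer moments by Jensen/Lyapunov. The only difference is bookkeeping: where the paper proves the even-moment bound by induction on the recursion $\mu_m'=\sum_{n=0}^{m-1}\binom{m-1}{n}\mu_n' c_{m-n}$, you invoke the closed-form set-partition expansion of moments in cumulants, which packages the same estimate with an explicit Bell-number constant.
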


\begin{proof}
Let $c_k\left(R_{t}^{\epsilon}\right)$ denote the k\emph{th} cumulant of $R_{t}^{\epsilon}$. Then $c_1\left(R_{t}^{\epsilon}\right)=\EE\left(R_{t}^{\epsilon}\right)=0$, and, for any $k\geq2$, $c_k\left(R_{t}^{\epsilon}\right)=t\int_{|x|\leq \epsilon}x^k\nu(dx)$ (note that $c_2\left(R_{t}^{\epsilon}\right)=\mbox{Var}\left(R_{t}^{\epsilon}\right)=\sigma^2(\epsilon)t$). See Proposition 1.2 of \cite{tankov}. Substituting into the general formula
\begin{eqnarray*}
\mu_4^{'}=c_4+4c_3c_1+3c_2^2+6c_2c_1^2+c_1^4
\end{eqnarray*}
(cf. \eqref{two} below), where, here and below, $\mu_k^{'}$ and $c_k$ denote the k\emph{th} moment and k\emph{th} cumulant of a distribution, respectively, gives the first part of the proposition. We now prove the second part. Let $n=\lceil q/2 \rceil$. Since $0<q/(2n)\leq1$,
\begin{eqnarray*}
\EE \left|R_t^{\epsilon}\right|^q&\leq&\left(\EE \left|R_t^{\epsilon}\right|^{2n}\right)^{\frac{q}{2n}}
\end{eqnarray*}
(by Jensen's inequality for concave functions). It thus suffices to prove the result for the case $q=2n$, $n\in\NN$; in fact, for any $n\in\NN$, it holds
\begin{equation}\label{eq_induction}
\left|\EE \left(R_t^{\epsilon}\right)^n\right|\leq K_{n,t}\sigma_0(\epsilon)^{n}. 
\end{equation}
The last inequality can be proved by induction as follows. It is trivial for $n = 0, 1, 2$. Suppose that \eqref{eq_induction} holds for all $n < m$. Then, by the well-known result (see e.g. Theorem 2 of \cite{morris})
\begin{equation}\label{two}
\mu_m^{'}=\sum_{n=0}^{m-1}\left(\begin{aligned}m-1\\ n\ \ \ \end{aligned}\right)\mu_n^{'}c_{m-n}, \ m\geq1,
\end{equation}
for all $m\geq2$ we have (recall that $c_1\left(R_t^{\epsilon}\right)=0$)
\begin{eqnarray*}
\left|\EE \left(R_t^{\epsilon}\right)^m\right|&\leq&\sum_{n=0}^{m-2}
\left(\begin{aligned}m-1\\ n\ \ \ \end{aligned}\right)\left|\EE \left(R_t^{\epsilon}\right)^n\right|\left|c_{m-n}\left(R_t^{\epsilon}\right)\right|.
\end{eqnarray*}
Hence, in view of the induction hypothesis, it suffices to show that $\left|c_{m-n}\left(R_t^{\epsilon}\right)\right|\leq t\sigma_0(\epsilon)^{m-n}$. Since $m-n\geq2$, we have $c_{m-n}\left(R_{t}^{\epsilon}\right)=t\int_{|x|\leq \epsilon}x^{m-n}\nu(dx)$, and hence
\begin{eqnarray*}
\left|c_{m-n}\left(R_t^{\epsilon}\right)\right|&\leq&t\int_{|x|\leq \epsilon}|x|^{m-n}\nu(dx)
\\&\leq&t\epsilon^{m-n-2}\int_{|x|\leq \epsilon}|x|^2\nu(dx)
\\&\leq&t\sigma_0(\epsilon)^{m-n}.
\end{eqnarray*}
The proposition is thus established.
\end{proof}

\subsection{Estimates for smooth functions}
Let $X$ be a L\'evy process and $f$ a $C$-Lipschitz function where $C>0$. Then,
\begin{eqnarray*}
 \EE \left|f\left(X_t\right)-f\left(X_t^{\epsilon}\right)\right|&\leq&C\EE\left|R_t^{\epsilon}\right|
\\&\leq&C\sqrt{\EE\left|R_t^{\epsilon}\right|^2}
\\&\leq& C\sqrt{t}\sigma(\epsilon).
\end{eqnarray*}
Note that we do not ask that $f\left(X_t\right)$ be integrable. If $f$ is more regular, sharper estimates can be derived, as shown in the following proposition.

\begin{proposition}\label{oX-}
Let $X$ be an infinite activity L\'evy process.
\begin{enumerate}
\item If $f\in C^{1}(\R)$ and satisfies $\EE\left|f^{'}\left(X_t^{\epsilon}\right)\right|<\infty$, and if there exists $\beta>1$ such that $\left(\sup_{\epsilon\in(0,1]}\EE\left|f^{'}\left(X_t^{\epsilon}+\theta R_t^{\epsilon}\right)-f^{'}\left(X_t^{\epsilon}\right)\right|^{\beta}\right)^{\frac{1}{\beta}}$ is finite and integrable with respect to $\theta$ on $[0,1]$, then
\begin{eqnarray*}
\EE\left(f\left(X_t\right) -f\left(X_t^{\epsilon}\right)\right)= o\left(\sigma_0(\epsilon)\right).
\end{eqnarray*}
\item If $f\in C^{2}(\R)$ and satisfies $\EE\left|f^{'}\left(X_t^{\epsilon}\right)\right|+\EE\left|f^{''}\left(X_t^{\epsilon}\right)\right|<\infty$, and if there exists $\beta>1$ such that $\left(\sup_{\epsilon\in(0,1]}\EE\left|f^{''}\left(X_t^{\epsilon}+\theta R_t^{\epsilon}\right)-f^{''}\left(X_t^{\epsilon}\right)\right|^{\beta}\right)^{\frac{1}{\beta}}$ is finite and integrable with respect to $\theta$ on $[0,1]$, then
\begin{eqnarray*}
\EE\left(f\left(X_t\right) -f\left(X_t^{\epsilon}\right)\right)= \frac{\sigma(\epsilon)^2t}{2}\EE f^{''}\left(X_t^{\epsilon}\right)+o\left(\sigma_0(\epsilon)^2\right).
\end{eqnarray*}
\end{enumerate}
\end{proposition}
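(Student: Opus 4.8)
The plan is to treat both parts by a Taylor expansion of $f$ around $X_t^\epsilon$, writing $X_t=X_t^\epsilon+R_t^\epsilon$ and exploiting two structural facts. First, by the L\'evy--It\^o decomposition the compensated small jumps $R_t^\epsilon$ are independent of $X_t^\epsilon$, which carries only the drift, the Brownian part and the jumps of size exceeding $\epsilon$. Second, by Proposition~\ref{Rnprop} together with $\EE R_t^\epsilon=0$ and $\EE(R_t^\epsilon)^2=\sigma(\epsilon)^2 t$, all moments of $R_t^\epsilon$ are controlled by powers of $\sigma_0(\epsilon)$. These two facts are what make the leading terms computable and the remainders small.

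For the first assertion I would use the first-order Taylor formula with integral remainder,
\[
f(X_t)-f(X_t^\epsilon)=R_t^\epsilon\int_0^1 f'\!\left(X_t^\epsilon+\theta R_t^\epsilon\right)d\theta,
\]
and split $f'(X_t^\epsilon+\theta R_t^\epsilon)=f'(X_t^\epsilon)+\bigl(f'(X_t^\epsilon+\theta R_t^\epsilon)-f'(X_t^\epsilon)\bigr)$. The contribution of $f'(X_t^\epsilon)$ is
\[
\EE\!\left[R_t^\epsilon f'(X_t^\epsilon)\right]=\EE\!\left[R_t^\epsilon\right]\EE\!\left[f'(X_t^\epsilon)\right]=0,
\]
by independence and $\EE R_t^\epsilon=0$ (the product is integrable since $f'(X_t^\epsilon)\in L^1$ and $R_t^\epsilon$ has finite moments). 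For the surviving term I would apply H\"older's inequality with conjugate exponents $(r',r)$, $r\in(1,\beta)$, and Fubini in $\theta$, to obtain
\[
\left|\EE\!\left(f(X_t)-f(X_t^\epsilon)\right)\right|\le \left(\EE\left|R_t^\epsilon\right|^{r'}\right)^{1/r'}\int_0^1\left(\EE\left|f'(X_t^\epsilon+\theta R_t^\epsilon)-f'(X_t^\epsilon)\right|^{r}\right)^{1/r}d\theta.
\]
By Proposition~\ref{Rnprop} the first factor is at most $K_{r',t}^{1/r'}\sigma_0(\epsilon)$, so it suffices to show the integral tends to $0$.

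The technical heart is this last limit. Since $\EE(R_t^\epsilon)^2=\sigma(\epsilon)^2t\to0$, we have $R_t^\epsilon\to0$ in probability, hence $X_t^\epsilon\to X_t$ and $X_t^\epsilon+\theta R_t^\epsilon\to X_t$ in probability; by continuity of $f'$ the integrand $f'(X_t^\epsilon+\theta R_t^\epsilon)-f'(X_t^\epsilon)$ tends to $0$ in probability for each $\theta$. The $L^\beta$-bound in the hypothesis, with $\beta>r$, furnishes uniform integrability of the $r$-th powers, which upgrades this to convergence in $L^r$, so the inner norm tends to $0$ pointwise in $\theta$; the same hypothesis supplies an integrable dominating function in $\theta$, and dominated convergence yields the claim. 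This is precisely where $\beta>1$ enters: it leaves room to choose $r\in(1,\beta)$, so that the H\"older conjugate $r'$ is finite and Proposition~\ref{Rnprop} applies.

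The second assertion follows the same scheme with the second-order expansion
\[
f(X_t)-f(X_t^\epsilon)=R_t^\epsilon f'(X_t^\epsilon)+\left(R_t^\epsilon\right)^2\int_0^1(1-\theta)f''\!\left(X_t^\epsilon+\theta R_t^\epsilon\right)d\theta.
\]
The first term again has zero expectation, while replacing $f''(X_t^\epsilon+\theta R_t^\epsilon)$ by $f''(X_t^\epsilon)$ in the remainder produces, by independence, the announced main term $\tfrac12\EE[(R_t^\epsilon)^2]\EE f''(X_t^\epsilon)=\tfrac{\sigma(\epsilon)^2t}{2}\EE f''(X_t^\epsilon)$. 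The error of this replacement is handled exactly as before, now bounding $(\EE|R_t^\epsilon|^{2r'})^{1/r'}\le K_{2r',t}^{1/r'}\sigma_0(\epsilon)^2$ and invoking continuity of $f''$ and its $L^\beta$-bound, which delivers a remainder $o(\sigma_0(\epsilon)^2)$. I expect the main obstacle to be the passage from convergence in probability to $L^r$-convergence and its integration in $\theta$, that is, the correct exploitation of the uniform-integrability/domination hypotheses, rather than the (routine) Taylor and moment estimates.
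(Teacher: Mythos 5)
Your proposal is correct and follows essentially the same route as the paper's proof: the Taylor expansion with integral remainder, independence of $R_t^\epsilon$ and $X_t^\epsilon$ to annihilate the first-order term, H\"older with an exponent strictly below $\beta$ combined with the moment bounds of Proposition~\ref{Rnprop}, and uniform integrability from the $L^\beta$ hypothesis plus dominated convergence in $\theta$. The only cosmetic difference is that you obtain the convergence of the difference of derivatives via convergence in probability (from the $L^2$ bound on $R_t^\epsilon$) where the paper invokes almost sure convergence, but both feed into the same uniform-integrability upgrade, so the arguments coincide in substance.
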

Note that, if $f$ has bounded derivatives or $f$ is the exponential function and $e^{\beta X_t}$ is integrable, where $\beta>1$, the conditions in the above proposition are satisfied. Recall that the truncation of small jumps is used when $\nu(\R)=\infty$. In typical applications, we have $\liminf \sigma(\epsilon)/\epsilon>0$, so that $o\left(\sigma_0(\epsilon)^2\right)$ is in fact $o\left(\sigma(\epsilon)^2\right)$.

\begin{proof}
To prove part 1, we first write $f\left(X_t\right) -f\left(X_t^{\epsilon}\right)$ as
\begin{equation}\label{eqprop2}
 f\left(X_t\right) -f\left(X_t^{\epsilon}\right)=\int_{0}^{1}\left(f^{'}\left(X_t^{\epsilon}+\theta R_t^{\epsilon}\right)-f^{'}\left(X_t^{\epsilon}\right)\right)R_t^{\epsilon}d\theta
+f^{'}\left(X_t^{\epsilon}\right)R_t^{\epsilon}
\end{equation}
(by Theorem 27.4 of \cite{sato}, $R_t^{\epsilon}\neq0$ a.s.). Since $R_t^{\epsilon}$ and $X_t^{\epsilon}$ are independent, $\EE\left[f^{'}\left(X_t^{\epsilon}\right)R_t^{\epsilon}\right]=0$. For any $1<\alpha<\beta$, by H\"older's inequality,
\begin{equation*}
\EE\left|\left(f^{'}\left(X_t^{\epsilon}+\theta R_t^{\epsilon}\right)-f^{'}\left(X_t^{\epsilon}\right)\right)R_t^{\epsilon}\right|\leq
\left(\EE\left|f^{'}\left(X_t^{\epsilon}+\theta R_t^{\epsilon}\right)-f^{'}\left(X_t^{\epsilon}\right)\right|^{\alpha}\right)^{\frac{1}{\alpha}}\left(\EE\left| R_t^{\epsilon}\right|^{\frac{\alpha}{\alpha-1}}\right)^{\frac{\alpha-1}{\alpha}}.
\end{equation*}
By Lyapunov's inequality,
\begin{equation*}
\left(\EE\left|f^{'}\left(X_t^{\epsilon}+\theta R_t^{\epsilon}\right)-f^{'}\left(X_t^{\epsilon}\right)\right|^{\alpha}\right)^{\frac{1}{\alpha}}\leq\left(\EE\left|f^{'}\left(X_t^{\epsilon}+\theta R_t^{\epsilon}\right)-f^{'}\left(X_t^{\epsilon}\right)\right|^{\beta}\right)^{\frac{1}{\beta}}.
\end{equation*}
Further, the assumption $\sup_{\epsilon\in(0,1]}\EE\left|f^{'}\left(X_t^{\epsilon}+\theta R_t^{\epsilon}\right)-f^{'}\left(X_t^{\epsilon}\right)\right|^{\beta}<\infty$ implies that
the collection $\left\{\left|f^{'}\left(X_t^{\epsilon}+\theta R_t^{\epsilon}\right)-f^{'}\left(X_t^{\epsilon}\right)\right|^{\alpha}\right\}_{\epsilon\in(0,1]}$ is uniformly integrable; hence,
since \\$\left|f^{'}\left(X_t^{\epsilon}+\theta R_t^{\epsilon}\right)-f^{'}\left(X_t^{\epsilon}\right)\right|^{\alpha}\rightarrow 0$ a.s. as $\epsilon\rightarrow 0$, $\EE\left|f^{'}\left(X_t^{\epsilon}+\theta R_t^{\epsilon}\right)-f^{'}\left(X_t^{\epsilon}\right)\right|^{\alpha}\rightarrow 0$ (pointwise for $\theta\in[0,1]$). Therefore, by dominated convergence,
\begin{equation*}
\lim_{\epsilon\rightarrow 0}\int_0^1\left(\EE\left|f^{'}\left(X_t^{\epsilon}+\theta R_t^{\epsilon}\right)-f^{'}\left(X_t^{\epsilon}\right)\right|^{\alpha}\right)^{\frac{1}{\alpha}}d\theta=0.
\end{equation*}
Combined with Proposition~\ref{Rnprop}, it thus follows that
\begin{eqnarray*}
\int_{0}^{1}\EE\left[\left(f^{'}\left(X_t^{\epsilon}+\theta R_t^{\epsilon}\right)-f^{'}\left(X_t^{\epsilon}\right)\right)R_t^{\epsilon}\right]d\theta
=o\left(\sigma_0(\epsilon)\right).
\end{eqnarray*}
Part 1 of the proposition then follows from \eqref{eqprop2} (using Fubini's theorem).
We now prove the second part of the proposition. Using Taylor's formula we get
\begin{eqnarray*}
 \EE\left(f\left(X_t\right) -f\left(X_t^{\epsilon}\right)\right)&=& \EE\left[ f^{'}\left(X_t^{\epsilon}\right)\left(X_t-X_t^{\epsilon}\right)+\int_{X_t^{\epsilon}}^{X_t}f^{''}\left(x\right)(X_t-x)dx\right]
\\&=&\EE \left[f^{'}\left(X_t^{\epsilon}\right)R_t^{\epsilon}+\int_{0}^{1}f^{''}\left(X_t^{\epsilon}+\theta R_t^{\epsilon}\right)(1-\theta)\left(R_t^{\epsilon}\right)^2d\theta\right]
\\&=&\EE\left[\int_{0}^{1}f^{''}\left(X_t^{\epsilon}+\theta R_t^{\epsilon}\right)(1-\theta)\left(R_t^{\epsilon}\right)^2d\theta\right]
\\&=&\EE\left[\int_{0}^{1}f^{''}\left(X_t^{\epsilon}\right)(1-\theta)\left(R_t^{\epsilon}\right)^2d\theta\right]
\\&&+\EE\left[\int_{0}^{1}\left(f^{''}\left(X_t^{\epsilon}+\theta R_t^{\epsilon}\right)-f^{''}\left(X_t^{\epsilon}\right)\right)(1-\theta)\left(R_t^{\epsilon}\right)^2d\theta\right].
\end{eqnarray*}
The first expectation after the last equality sign is equal to $\frac{\sigma(\epsilon)^2t}{2}\EE f^{''}\left(X_t^{\epsilon}\right)$
while the second one can be shown to be $o\left(\sigma_0(\epsilon)^2\right)$ by following the proof of part 1. The proposition is proved.
\end{proof}

\begin{remark}\label{oX-lip}
Assume that $X$ is an integrable infinite activity L\'evy process and that $f\in C^{1}(\R)$ with $f^{'}$ being $C$-Lipschitz. Then
\begin{eqnarray*}
\left|\EE\left(f\left(X_t\right) -f\left(X_t^{\epsilon}\right)\right)\right|\leq\frac{C\sigma(\epsilon)^2t}{2}.
\end{eqnarray*}
\end{remark}
Indeed, $\EE\left[f^{'}\left(X_t^{\epsilon}\right)R_t^{\epsilon}\right]=0$ (by the assumptions on $X$ and $f$, $\EE\left|f^{'}\left(X_t^{\epsilon}\right)\right|<\infty$), and so the result follows directly from \eqref{eqprop2} using
\begin{eqnarray*}
\left|\EE\left(f\left(X_t\right) -f\left(X_t^{\epsilon}\right)\right)\right|\leq \EE\left[\int_0^1 \left|f^{'}\left(X_t^{\epsilon}+\theta R_t^{\epsilon}\right)-f^{'}\left(X_t^{\epsilon}\right)\right|\left|R_t^{\epsilon}\right|d\theta\right].
\end{eqnarray*}

We will consider now the case of the supremum process.

\begin{proposition}\label{OM}
Let $X$ be a L\'evy process and $f$ a $K$-Lipschitz function. Then
\begin{eqnarray*}
\EE\left|f\left(M_t\right) -f\left(M_t^{\epsilon}\right)\right|\leq 2K\sqrt{t}\sigma(\epsilon).
\end{eqnarray*}
\end{proposition}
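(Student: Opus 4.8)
The plan is to reduce the whole statement to a maximal inequality for the martingale $R^\epsilon$. First, since $f$ is $K$-Lipschitz we have the pointwise bound $\left|f\left(M_t\right)-f\left(M_t^\epsilon\right)\right|\leq K\left|M_t-M_t^\epsilon\right|$, so after taking expectations it suffices to prove that $\EE\left|M_t-M_t^\epsilon\right|\leq 2\sqrt{t}\sigma(\epsilon)$. This strips away $f$ and leaves a purely pathwise question about the two supremum processes.

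The second step is the elementary observation that the supremum functional is $1$-Lipschitz for the uniform norm on paths: for any two real functions $g,h$ on $[0,t]$ one has $\left|\sup g-\sup h\right|\leq\sup\left|g-h\right|$ (write $\sup g\leq\sup h+\sup(g-h)$ and symmetrize). Applying this to the paths $g=X_\cdot$ and $h=X_\cdot^\epsilon$, and recalling from \eqref{reps} that $X_s-X_s^\epsilon=R_s^\epsilon$, gives the pathwise estimate $\left|M_t-M_t^\epsilon\right|\leq\sup_{0\leq s\leq t}\left|R_s^\epsilon\right|$, and hence $\EE\left|M_t-M_t^\epsilon\right|\leq\EE\sup_{0\leq s\leq t}\left|R_s^\epsilon\right|$.

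The third step controls the running maximum of $R^\epsilon$. Because $R^\epsilon$ is a L\'evy process with $\EE R_t^\epsilon=0$ and finite variance $\sigma(\epsilon)^2 t$, it is a square-integrable martingale, so $\left|R^\epsilon\right|$ is a nonnegative submartingale; Doob's $L^2$ maximal inequality then yields $\EE\sup_{0\leq s\leq t}\left|R_s^\epsilon\right|^2\leq 4\,\EE\left|R_t^\epsilon\right|^2=4\sigma(\epsilon)^2 t$. Jensen's inequality (equivalently Cauchy--Schwarz) finally gives $\EE\sup_{0\leq s\leq t}\left|R_s^\epsilon\right|\leq\bigl(\EE\sup_{0\leq s\leq t}\left|R_s^\epsilon\right|^2\bigr)^{1/2}\leq 2\sqrt{t}\sigma(\epsilon)$. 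Chaining the three bounds proves the proposition, the constant $2$ appearing precisely as $\sqrt{4}$ from Doob's constant $(p/(p-1))^p=4$ at $p=2$.

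There is no genuine obstacle here; the only points deserving a moment of care are the verification that $R^\epsilon$ really is a martingale (immediate from stationary independent increments with zero mean, since $\EE\left[R_{s+u}^\epsilon-R_s^\epsilon\mid\mathcal{F}_s\right]=\EE R_u^\epsilon=0$) and the choice $p=2$ in Doob's inequality, which is what produces the sharp-looking constant in the statement. Note that no integrability of $X_t$ itself is needed, exactly as in the Lipschitz estimate preceding Proposition~\ref{oX-}.
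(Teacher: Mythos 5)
Your proof is correct and follows essentially the same route as the paper: reduce via the Lipschitz property and the $1$-Lipschitz continuity of the supremum functional to $\EE\sup_{0\leq s\leq t}\left|R_s^{\epsilon}\right|$, then combine Doob's $L^2$ maximal inequality for the c\`adl\`ag martingale $R^{\epsilon}$ with Cauchy--Schwarz to get the constant $2K\sqrt{t}\sigma(\epsilon)$. Your added remarks (verifying the martingale property and noting that no integrability of $X_t$ is needed) are accurate but do not change the argument.
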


\begin{proof}
We have
\begin{eqnarray*}
\EE\left|f\left(\sup_{0 \leq s \leq t} X_s\right) -f\left(\sup_{0 \leq s \leq t} X_s^{\epsilon}\right)\right|&\leq& K\EE\left|\sup_{0 \leq s \leq t} X_s -\sup_{0 \leq s \leq t} X_s^{\epsilon}\right|\\
&\leq& K\EE\sup_{0 \leq s \leq t} \left|R_s^{\epsilon}\right|\\
&\leq& K\sqrt{\EE\left(\sup_{0 \leq s \leq t} \left|R_s^{\epsilon}\right|\right)^2}.
\end{eqnarray*}
Note that $R^{\epsilon}$ is a c\`adl\`ag martingale. So, using Doob's inequality, we get
\begin{eqnarray*}
\EE\left|f\left(\sup_{0 \leq s \leq t} X_s\right) -f\left(\sup_{0 \leq s \leq t} X_s^{\epsilon}\right)\right|&\leq& 2K\sqrt{\EE\left|R_t^{\epsilon}\right|^2}\\
&=&2K\sqrt{t}\sigma(\epsilon).
\end{eqnarray*}
\end{proof}

\begin{remark}\label{OsupX}
Suppose that $X$ is an integrable L\'evy process and $f$ a function from $\R^{+}\times\R$ to $\R$, $K$-Lipschitz with respect to its second variable. Then
\begin{eqnarray*}
\left|\sup_{\tau\in\mathcal{T}_{[0,t]}}\EE f\left(\tau,X_{\tau}\right)-\sup_{\tau\in\mathcal{T}_{[0,t]}}\EE f\left(\tau,X_{\tau}^{\epsilon}\right)\right|\leq 2K\sqrt{t}\sigma(\epsilon),
\end{eqnarray*}
where $\mathcal{T}_{[0,t]}$ denotes the set of stopping times with values in $[0,t]$. For a proof, the reader is referred to \cite{dia}, pp. $67-68$.
\end{remark}

The bound in Proposition~\ref{OM} might not be optimal. This is what suggests the following result.
\begin{theorem}\label{petitssautstronc}
Let $X$ be an integrable infinite activity L\'evy process. Then
\begin{eqnarray*}
0\leq\EE\left(M_t-M_t^{\epsilon}\right)=o\left(\sigma(\epsilon)\right).
\end{eqnarray*}
\end{theorem}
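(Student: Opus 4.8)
The plan is to convert the pathwise supremum problem into a one–dimensional computation by means of Spitzer's identity, which for an integrable L\'evy process $Y$ states that $\EE\big(\sup_{0\le s\le t}Y_s\big)=\int_0^t s^{-1}\EE(Y_s^+)\,ds$. Both $X$ and $X^{\epsilon}$ are integrable (since $R^{\epsilon}$ has all moments by Proposition~\ref{Rnprop}), so applying this to each gives
\[
\EE\left(M_t-M_t^{\epsilon}\right)=\int_0^t \frac1s\left(\EE\big(X_s^+\big)-\EE\big((X_s^{\epsilon})^+\big)\right)ds,
\]
and the whole statement reduces to controlling the integrand $\EE(X_s^+)-\EE((X_s^{\epsilon})^+)$ for each fixed $s$ and then passing that control through the integral.

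For the integrand I would write $X_s=X_s^{\epsilon}+R_s^{\epsilon}$ with $R_s^{\epsilon}$ independent of $X_s^{\epsilon}$ and $\EE(R_s^{\epsilon})=0$, and use the elementary pointwise inequality, valid for all real $x,r$,
\[
0\le (x+r)^+-x^+-r\,\indicatrice_{x>0}\le |r|\,\indicatrice_{|x|\le|r|}.
\]
Applying it with $x=X_s^{\epsilon}$, $r=R_s^{\epsilon}$ and taking expectations (the middle term has zero mean by independence) yields \emph{both} assertions at the level of $s$:
\[
0\le \EE\big(X_s^+\big)-\EE\big((X_s^{\epsilon})^+\big)\le \EE\!\left[|R_s^{\epsilon}|\,\indicatrice_{\{|X_s^{\epsilon}|\le |R_s^{\epsilon}|\}}\right].
\]
The left inequality integrates at once to $\EE(M_t-M_t^{\epsilon})\ge0$.

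It then remains to show the right-hand side is $o(\sigma(\epsilon))$. Dividing by $\sigma(\epsilon)$, set $Z_{\epsilon}=\tfrac{|R_s^{\epsilon}|}{\sigma(\epsilon)}\,\indicatrice_{\{|X_s^{\epsilon}|\le|R_s^{\epsilon}|\}}$. By Proposition~\ref{Rnprop}, $\EE(R_s^{\epsilon})^2=\sigma(\epsilon)^2 s$, so $|R_s^{\epsilon}|/\sigma(\epsilon)$ is bounded in $L^2$, hence uniformly integrable, and $0\le Z_{\epsilon}$ is dominated by this family. Since $R_s^{\epsilon}\to0$ and $X_s^{\epsilon}\to X_s$ in probability as $\epsilon\to0$, while $X_s$ has no atom at $0$ — here I invoke that $X$ has infinite activity, so $X_s$ has a continuous law (Theorem 27.4 of \cite{sato}) — the indicator tends to $0$ in probability; thus $Z_{\epsilon}\to0$ in probability and, by uniform integrability, $\EE Z_{\epsilon}\to0$. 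This gives $\EE(X_s^+)-\EE((X_s^{\epsilon})^+)=o(\sigma(\epsilon))$ for each $s\in(0,t]$. Finally, the crude estimate $\EE[|R_s^{\epsilon}|\,\indicatrice_{\{\cdots\}}]\le \EE|R_s^{\epsilon}|\le\sqrt s\,\sigma(\epsilon)$ shows that $\tfrac1{s\sigma(\epsilon)}\big(\EE(X_s^+)-\EE((X_s^{\epsilon})^+)\big)\le s^{-1/2}$, which is integrable on $[0,t]$, so dominated convergence moves the limit inside the integral and delivers $\EE(M_t-M_t^{\epsilon})=o(\sigma(\epsilon))$.

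The main obstacle is precisely the non-smoothness of $x\mapsto x^+$ at the origin: the clean $o$-estimates of Proposition~\ref{oX-} are unavailable because the second derivative of $x^+$ is a point mass, so the vanishing of the error rests entirely on controlling the probability that $X_s^{\epsilon}$ lies within $|R_s^{\epsilon}|$ of $0$. This is exactly where the infinite activity hypothesis is used, guaranteeing $\P(X_s=0)=0$ and hence the in-probability convergence of the indicator; without it (finite activity with $b=0$) the law of $X_s^{\epsilon}$ may carry an atom at $0$ and the argument fails.
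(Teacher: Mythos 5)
Your proof is correct and takes essentially the same route as the paper: Spitzer's identity reduces the problem to controlling $\EE\left(X_s^{+}\right)-\EE\left(\left(X_s^{\epsilon}\right)^{+}\right)$, your two-sided pointwise inequality is exactly the paper's case-by-case decomposition in compact form (isolating the mean-zero term $R_s^{\epsilon}\indicatrice_{X_s^{\epsilon}>0}$ and bounding the rest by $\left|R_s^{\epsilon}\right|\indicatrice_{\{|X_s^{\epsilon}|\leq|R_s^{\epsilon}|\}}$), and infinite activity enters identically through Theorem 27.4 of Sato to make the indicator vanish. The only cosmetic difference is the final step at fixed $s$, where you use uniform integrability (via $L^2$-boundedness of $\left|R_s^{\epsilon}\right|/\sigma(\epsilon)$) whereas the paper applies Cauchy--Schwarz to get $\sigma(\epsilon)\sqrt{s}\,\P\left[\left|X_s^{\epsilon}\right|<\left|R_s^{\epsilon}\right|\right]^{\frac{1}{2}}$ before the same dominated-convergence pass through the integral.
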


\begin{proof}
Using Spitzer's identity (see Proposition $1$ in Section $3$ of \cite{dia-lamberton} for details), we have
\begin{eqnarray*}
\EE\left(M_t-M_t^{\epsilon}\right)&=&\int_0^t\frac{\EE X_s^{+}}{s}ds-\int_0^t\frac{\EE\left(X_s^{\epsilon}\right)^{+}}{s}ds
\\&=&\int_0^t\EE\left(X_s^{+}-\left(X_s^{\epsilon}\right)^{+}\right)\frac{ds}{s}.
\end{eqnarray*}
It holds
\begin{eqnarray*}
X_s^{+}-\left(X_s^{\epsilon}\right)^{+}&=&\left(X_s^{\epsilon}+R_s^{\epsilon}\right)^{+}-\left(X_s^{\epsilon}\right)^{+}
\\&=&\left(X_s^{\epsilon}+R_s^{\epsilon}\right)\indicatrice_{X_s^{\epsilon}+R_s^{\epsilon}>0}-X_s^{\epsilon}\indicatrice_{X_s^{\epsilon}>0}
\\&=&\left(X_s^{\epsilon}+R_s^{\epsilon}\right)\left(\indicatrice_{X_s^{\epsilon}>0}+\indicatrice_{X_s^{\epsilon}+R_s^{\epsilon}>0,X_s^{\epsilon}\leq0}-\indicatrice_{X_s^{\epsilon}+R_s^{\epsilon}\leq0,X_s^{\epsilon}>0}\right)-X_s^{\epsilon}\indicatrice_{X_s^{\epsilon}>0}
\\&=&\left(X_s^{\epsilon}+R_s^{\epsilon}\right)\left(\indicatrice_{X_s^{\epsilon}+R_s^{\epsilon}>0,X_s^{\epsilon}\leq0}-\indicatrice_{X_s^{\epsilon}+R_s^{\epsilon}\leq0,X_s^{\epsilon}>0}\right)+R_s^{\epsilon}\indicatrice_{X_s^{\epsilon}>0}
\\&=&\left(\left|R_s^{\epsilon}\right|-\left|X_s^{\epsilon}\right|\right)^{+}\left(\indicatrice_{X_s^{\epsilon}+R_s^{\epsilon}>0,X_s^{\epsilon}\leq0}+\indicatrice_{X_s^{\epsilon}+R_s^{\epsilon}\leq0,X_s^{\epsilon}>0}\right)+R_s^{\epsilon}\indicatrice_{X_s^{\epsilon}>0}.
\end{eqnarray*}
Set $I_s^{\epsilon}=\EE\left(X_s^{+}-\left(X_s^{\epsilon}\right)^{+}\right)$. Thus, since $\EE\left(R_s^{\epsilon}\indicatrice_{X_s^{\epsilon}>0}\right)=0$ (by independence),
\begin{eqnarray*}
0\leq I_s^{\epsilon}\leq \EE\left(\left|R_s^{\epsilon}\right|-\left|X_s^{\epsilon}\right|\right)^{+}.
\end{eqnarray*}
By the left inequality, $\EE\left(M_t-M_t^{\epsilon}\right)\geq0$. We now prove that $\EE\left(M_t-M_t^{\epsilon}\right)=o\left(\sigma(\epsilon)\right)$. Since $\left(\left|R_s^{\epsilon}\right|-\left|X_s^{\epsilon}\right|\right)^{+}\leq \left|R_s^{\epsilon}\right|\indicatrice_{\left|X_s^{\epsilon}\right|<\left|R_s^{\epsilon}\right|}$, we get $I_s^{\epsilon}\leq\EE\left( \left|R_s^{\epsilon}\right|\indicatrice_{\left|X_s^{\epsilon}\right|<\left|R_s^{\epsilon}\right|}\right)$. Hence, by Cauchy-Scwarz inequality, 
\begin{eqnarray*}
I_s^{\epsilon}&\leq&\left(\EE\left|R_s^{\epsilon}\right|^2\right)^{\frac{1}{2}}\left(\EE\left( \indicatrice_{\left|X_s^{\epsilon}\right|<\left|R_s^{\epsilon}\right|}\right)^2\right)^{\frac{1}{2}}
\\&=&\sigma(\epsilon)\sqrt{s}\P\left[\left|X_s^{\epsilon}\right|<\left|R_s^{\epsilon}\right|\right]^{\frac{1}{2}}.
\end{eqnarray*}
Thus,
\begin{eqnarray*}
0\leq\EE\left(M_t-M_t^{\epsilon}\right)&\leq&\sigma(\epsilon)\int_0^t\P\left[\left|X_s^{\epsilon}\right|<\left|R_s^{\epsilon}\right|\right]^{\frac{1}{2}}\frac{ds}{\sqrt{s}}.
\end{eqnarray*}
Since $\nu(\R)=\infty$, $R_s^{\epsilon}\rightarrow0$ a.s. and $X_s^{\epsilon}\rightarrow X_s$ a.s. with $X_s\neq0$. Hence $\P\left[\left|X_s^{\epsilon}\right|<\left|R_s^{\epsilon}\right|\right]^{\frac{1}{2}}\rightarrow0$ as $\epsilon\rightarrow0$.
Therefore, by dominated convergence,
\begin{eqnarray*}
\lim_{\epsilon\rightarrow0}\int_0^t\P\left[\left|X_s^{\epsilon}\right|<\left|R_s^{\epsilon}\right|\right]^{\frac{1}{2}}\frac{ds}{\sqrt{s}}=0,
\end{eqnarray*}
and so $\EE\left(M_t-M_t^{\epsilon}\right)=o\left(\sigma(\epsilon)\right)$.
\end{proof}

In financial applications, the function $f$ in Proposition~\ref{OM} is not always Lipschitz, as for call lookback option where the function is exponential. Hence the following proposition.
\begin{proposition}\label{OeM}
Let $X$ be a L\'evy process and $p>1$. If $\EE e^{pM_t}<\infty$, then
\begin{eqnarray*}
\EE \left|e^{M_t}-e^{M_t^{\epsilon}}\right|\leq C_{p,t}\sigma_0(\epsilon),
\end{eqnarray*}
where $C_{p,t}$ is a positive constant independent of $\epsilon$.
\end{proposition}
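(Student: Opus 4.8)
The plan is to bound the exponential difference by a convexity (mean value) estimate and then reduce everything to moment and exponential-moment estimates for $R^\epsilon$, all of which are either already available (Proposition~\ref{Rnprop}, Doob's inequality) or computable from the explicit Laplace exponent of $R^\epsilon$. Throughout, write $S_\epsilon=\sup_{0\le s\le t}|R_s^\epsilon|$ and $p'=p/(p-1)>1$.

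First I would record two pathwise inequalities. Since $X_s^\epsilon=X_s-R_s^\epsilon\le M_t+S_\epsilon$ for every $s\le t$, taking the supremum gives $M_t^\epsilon\le M_t+S_\epsilon$, while the Lipschitz estimate $|M_t-M_t^\epsilon|\le S_\epsilon$ used in Proposition~\ref{OM} still holds. Hence $\max(M_t,M_t^\epsilon)\le M_t+S_\epsilon$, and by the mean value theorem applied to the exponential, $|e^{M_t}-e^{M_t^\epsilon}|\le e^{\max(M_t,M_t^\epsilon)}|M_t-M_t^\epsilon|\le e^{M_t}e^{S_\epsilon}S_\epsilon$. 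Taking expectations and applying H\"older's inequality with exponents $p$ and $p'$,
\[
\EE\left|e^{M_t}-e^{M_t^\epsilon}\right|\le\left(\EE e^{pM_t}\right)^{1/p}\left(\EE\left[e^{p'S_\epsilon}S_\epsilon^{p'}\right]\right)^{1/p'}.
\]
The first factor is finite and independent of $\epsilon$ by hypothesis, so it remains to show that the second factor is $O(\sigma_0(\epsilon))$. (Note this route uses the exponential moment of $M_t$ only at the given order $p$, avoiding any need to bound $\EE e^{pM_t^\epsilon}$.)

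For the second factor I would apply H\"older once more, with any conjugate pair $r,r'>1$, to split $e^{p'S_\epsilon}$ from $S_\epsilon^{p'}$:
\[
\EE\left[e^{p'S_\epsilon}S_\epsilon^{p'}\right]\le\left(\EE e^{p'rS_\epsilon}\right)^{1/r}\left(\EE S_\epsilon^{p'r'}\right)^{1/r'}.
\]
The power term is controlled exactly as in Proposition~\ref{OM}: since $R^\epsilon$ is a c\`adl\`ag martingale, Doob's $L^q$ inequality with $q=p'r'>1$ together with Proposition~\ref{Rnprop} gives $\EE S_\epsilon^{p'r'}\le C\,\sigma_0(\epsilon)^{p'r'}$, whence $(\EE S_\epsilon^{p'r'})^{1/r'}\le C'\sigma_0(\epsilon)^{p'}$. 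Substituting back then yields the claimed bound, with $C_{p,t}$ collecting $(\EE e^{pM_t})^{1/p}$ and the various constants.

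The main obstacle is the uniform-in-$\epsilon$ control of the exponential moment $\EE e^{\lambda S_\epsilon}$ with $\lambda=p'r$. Here I would exploit that $R^\epsilon$ has jumps bounded by $\epsilon\le1$ and the explicit Laplace exponent $\EE e^{\lambda R_t^\epsilon}=\exp\{t\int_{|x|\le\epsilon}(e^{\lambda x}-1-\lambda x)\nu(dx)\}$. Writing $e^{\lambda S_\epsilon}\le\sup_{s\le t}e^{\lambda R_s^\epsilon}+\sup_{s\le t}e^{-\lambda R_s^\epsilon}$ and applying Doob's $L^2$ inequality to the nonnegative submartingales $e^{(\lambda/2)R_s^\epsilon}$ and $e^{-(\lambda/2)R_s^\epsilon}$, one reduces to bounding $\EE e^{\pm\lambda R_t^\epsilon}$. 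Finally, using $e^{\lambda x}-1-\lambda x\le\tfrac12\lambda^2x^2e^{\lambda}$ for $|x|\le1$ and $\int_{|x|\le\epsilon}x^2\nu(dx)=\sigma(\epsilon)^2\le\sigma(1)^2$, these Laplace transforms are bounded by $\exp\{\tfrac12 t\lambda^2e^{\lambda}\sigma(1)^2\}$, uniformly in $\epsilon\in(0,1]$. This uniform exponential bound is exactly what keeps the first factor in the second H\"older estimate finite and independent of $\epsilon$, closing the argument.
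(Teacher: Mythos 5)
Your proof is correct, and it diverges from the paper's in the one step that actually carries the difficulty. Both arguments share the outer skeleton: mean value theorem, H\"older with exponents $p$ and $p/(p-1)$, and Doob's $L^q$ inequality plus Proposition~\ref{Rnprop} to extract the factor $\sigma_0(\epsilon)$ from $S_\epsilon=\sup_{0\leq s\leq t}\left|R_s^{\epsilon}\right|$. They differ on how the exponential factor is controlled uniformly in $\epsilon$. The paper keeps the mean-value point $\bar{M}_t^{\epsilon}$, bounds $e^{p\bar{M}_t^{\epsilon}}\leq e^{pM_t}+e^{pM_t^{\epsilon}}$, and then invokes the dedicated Lemma~\ref{OeMlemme} ($\sup_{0<\delta\leq1}\EE e^{pM_t^{\delta}}<\infty$), whose proof needs Remark~\ref{OeMlemmermq}, the decomposition $X^{\delta}=X^1+\bar{R}^{\delta}$ at truncation level $1$, and a term-by-term series expansion with Doob applied at each power. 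You instead dominate pathwise, $\max\left(M_t,M_t^{\epsilon}\right)\leq M_t+S_\epsilon$, which pushes all $\epsilon$-dependence into $S_\epsilon$, uses the hypothesis $\EE e^{pM_t}<\infty$ exactly once and only at order $p$, and reduces everything to a uniform bound on $\EE e^{\lambda S_\epsilon}$ obtained by one application of Doob's $L^2$ inequality to the exponential submartingales $e^{\pm(\lambda/2)R_s^{\epsilon}}$ together with the same Laplace-exponent estimate $e^{\lambda x}-1-\lambda x\leq\frac{1}{2}\lambda^2x^2e^{|\lambda|}$ for $|x|\leq\epsilon\leq1$ that the paper uses inside its lemma. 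All the technical points check out: $M_t^{\epsilon}\leq M_t+S_\epsilon$ holds since $X_s^{\epsilon}=X_s-R_s^{\epsilon}$; $e^{(\lambda/2)R_s^{\epsilon}}$ is a c\`adl\`ag $L^2$ submartingale because $R^{\epsilon}$ has jumps bounded by $1$, so all exponential moments exist (justifying also the analytic extension $\EE e^{\lambda R_t^{\epsilon}}=\exp\left\{t\int_{|x|\leq\epsilon}\left(e^{\lambda x}-1-\lambda x\right)\nu(dx)\right\}$), and conditional Jensen applies; and the extra H\"older split of $e^{p'S_\epsilon}S_\epsilon^{p'}$ is harmless since the free parameter can be fixed, say $r=r'=2$. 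Your route is more self-contained, dispensing with Lemma~\ref{OeMlemme} and the level-$1$ decomposition entirely; what the paper's choice buys is the lemma itself as a reusable tool, which it needs again later (in the proof of Proposition~\ref{oeM}, where $\EE e^{p\hat{M}_t^{\epsilon}}$ is reduced to $\EE e^{pM_t^{\epsilon}}$), whereas your argument proves only the proposition at hand.
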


\begin{lemma}\label{OeMlemme}
Let $p>0$. If $\EE e^{pM_t}<\infty$ , then $\sup_{0<\delta\leq1}\EE e^{pM_t^{\delta}}<\infty$.
\end{lemma}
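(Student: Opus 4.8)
The plan is to control $M_t^\delta$ by $M_t$ up to the size of the discarded small-jump part, and then to exploit the independence of the processes $X^\delta$ and $R^\delta$. Write $S_t^\delta=\sup_{0\le s\le t}|R_s^\delta|$. Since $X_s=X_s^\delta+R_s^\delta$ and $R_s^\delta\ge -S_t^\delta$ for every $s$, I would first establish the pathwise inequality
$$M_t=\sup_{0\le s\le t}\left(X_s^\delta+R_s^\delta\right)\ge \sup_{0\le s\le t}X_s^\delta-S_t^\delta=M_t^\delta-S_t^\delta,$$
so that, for $p>0$, $e^{pM_t}\ge e^{pM_t^\delta}e^{-pS_t^\delta}$.

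Next I would take expectations and use that $M_t^\delta$ is a functional of $X^\delta$ while $S_t^\delta$ is a functional of $R^\delta$, these two processes being independent (cf. the proof of Proposition~\ref{oX-}). Hence $e^{pM_t^\delta}$ and $e^{-pS_t^\delta}$ are independent, which gives
$$\EE e^{pM_t}\ge \EE\!\left[e^{pM_t^\delta}e^{-pS_t^\delta}\right]=\EE e^{pM_t^\delta}\cdot\EE e^{-pS_t^\delta},$$
and therefore $\EE e^{pM_t^\delta}\le \EE e^{pM_t}\big/\EE e^{-pS_t^\delta}$. The whole problem then reduces to a lower bound on $\EE e^{-pS_t^\delta}$ that is uniform in $\delta\in(0,1]$. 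Arranging the inequality in this direction is the crucial point: $M_t$ and $S_t^\delta$ are \emph{not} independent, so the naive bound $e^{pM_t^\delta}\le e^{pM_t}e^{pS_t^\delta}$ followed by H\"older would require exponential moments of $M_t$ of order strictly larger than $p$, which are not assumed.

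The main obstacle is precisely this uniform lower bound, since $S_t^\delta$ is small only for $\delta$ near $0$ and need not be uniformly small over all of $(0,1]$. I would handle it by truncation. Because $R^\delta$ is a c\`adl\`ag $L^2$-martingale, Doob's inequality gives $\EE(S_t^\delta)^2\le 4\,\EE|R_t^\delta|^2=4\sigma(\delta)^2t$, and since $\sigma(\delta)^2=\int_{|x|\le\delta}x^2\nu(dx)\le\int_{|x|\le1}x^2\nu(dx)=\sigma(1)^2$ for every $\delta\le1$, this bound is uniform: $\EE(S_t^\delta)^2\le 4\sigma(1)^2t$. Chebyshev's inequality then lets me pick $A$, depending only on $t$ and $\nu$, with $\P\!\left[S_t^\delta>A\right]\le 1/2$ for all $\delta$, whence $\EE e^{-pS_t^\delta}\ge e^{-pA}\,\P\!\left[S_t^\delta\le A\right]\ge \tfrac12 e^{-pA}>0$ uniformly in $\delta$. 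Combining this with the previous display yields $\sup_{0<\delta\le1}\EE e^{pM_t^\delta}\le 2e^{pA}\,\EE e^{pM_t}<\infty$, which is the claim.
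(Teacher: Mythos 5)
Your proof is correct, but it takes a genuinely different route from the paper's. The paper decomposes in the other direction: it writes $X^{\delta}=X^{1}+\bar{R}^{\delta}$ with $\bar{R}^{\delta}$ the compensated jumps in $(\delta,1]$, factors $\EE e^{pM_t^{\delta}}\leq \EE e^{p\sup_s X_s^1}\,\EE e^{p\sup_s|\bar{R}_s^{\delta}|}$ by independence, and then must produce a $\delta$-uniform bound on the \emph{exponential} moment of $\sup_s|\bar{R}_s^{\delta}|$ — done by expanding $e^{p\sup|\bar{R}^{\delta}|}$ in a power series, applying Doob's $L^n$ inequality term by term, and estimating the moment-generating function $\EE e^{\beta\bar{R}_t^{\delta}}$ of the compensated compound Poisson process via $e^{\beta x}-1-\beta x\leq \beta^2x^2e^{|\beta|}/2$ on $|x|\leq1$; it also needs $\EE e^{p\sup_s X_s^1}<\infty$, which it justifies through Remark~\ref{OeMlemmermq}. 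Your reversed inequality $e^{pM_t}\geq e^{pM_t^{\delta}}e^{-pS_t^{\delta}}$ combined with the independence of $X^{\delta}$ and $R^{\delta}$ lets you \emph{divide} instead of multiply, so you only need a uniform \emph{lower} bound on $\EE e^{-pS_t^{\delta}}$, and that costs nothing more than second moments (Doob in $L^2$ plus Chebyshev, uniform since $\sigma(\delta)\leq\sigma(1)$). This is more elementary — no series expansion, no MGF estimate, no appeal to Remark~\ref{OeMlemmermq} — it uses the hypothesis $\EE e^{pM_t}<\infty$ directly, and it yields an explicit constant $2e^{pA}$ with $A$ depending only on $t$ and $\sigma(1)$. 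Your diagnosis of why the naive upper bound $M_t^{\delta}\leq M_t+S_t^{\delta}$ fails (no independence, and H\"older would demand exponential moments of order strictly larger than $p$) is exactly right and is the same obstruction the paper's level-one decomposition is designed to sidestep. One small point worth making explicit: when you factor $\EE\bigl[e^{pM_t^{\delta}}e^{-pS_t^{\delta}}\bigr]=\EE e^{pM_t^{\delta}}\cdot\EE e^{-pS_t^{\delta}}$ you do not yet know that $\EE e^{pM_t^{\delta}}$ is finite, so you should note that for independent nonnegative random variables this product identity holds in $[0,+\infty]$ (Tonelli); since $\EE e^{pM_t}<\infty$ and $\EE e^{-pS_t^{\delta}}\geq e^{-pA}/2>0$, finiteness of $\EE e^{pM_t^{\delta}}$ then follows rather than being assumed.
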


\begin{remark}\label{OeMlemmermq}
For any $p>0$, $\EE e^{pM_t}<\infty$ if and only if $\int_{x>1}e^{px}\nu(dx)<\infty$.
\end{remark}

The ``only if" part follows from Theorem 25.3 of \cite{sato}, noting that $e^{pX_t}\leq e^{pM_t}$. For the ``if" part, decompose $X$ as the independent sum $X = Y + Z + Z^{'}$ of L\'evy processes, where $Y$ has L\'evy measure $[\nu]_{\{|x|\leq1\}}$, and $Z$ and $Z^{'}$ are
pure jump with L\'evy measures $[\nu]_{\{x>1\}}$ and $[\nu]_{\{x<-1\}}$, respectively. Here $[\nu]_{E}$ denotes the restriction of $\nu$ to $E$. Note that $M_t\leq \sup_{0\leq s\leq t}Y_s+Z_t$; thus $\EE\left[ e^{pM_t}\right]\leq \EE\left[ e^{p\sup_{0\leq s\leq t}Y_s}\right]\EE\left[ e^{pZ_t}\right]$. It can be deduced from Theorems 25.3 and
25.18 of \cite{sato} that $\EE\left[ e^{p\sup_{0\leq s\leq t}Y_s}\right]$ is finite; so is $\EE\left[ e^{pZ_t}\right]$ by the former theorem,
under the assumption that $\int_{x>1}e^{px}\nu(dx)<\infty$. Hence $\EE\left[ e^{pM_t}\right]<\infty$.

\begin{proof}[Proof of Lemma~\ref{OeMlemme}] 
For $\delta\in(0,1]$, define $\bar{R}^{\delta}=X^{\delta}-X^1$. The process $\bar{R}^{\delta}$ is the compensated sum of jumps belonging to $(\delta,1]$ in absolute value. So
\begin{eqnarray*}
\EE e^{pM_t^{\delta}}&\leq&\EE e^{p\sup_{0\leq s\leq t}X_s^1+p\sup_{0\leq s\leq t}\bar{R}_s^{\delta}}
\\&\leq&\EE e^{p\sup_{0\leq s\leq t}X_s^1}\EE e^{p\sup_{0\leq s\leq t}\left|\bar{R}_s^{\delta}\right|}.
\end{eqnarray*}
By hypothesis and Remark~\ref{OeMlemmermq}, noting that Remark~\ref{OeMlemmermq} holds also for $M_t^1$, $\EE e^{p\sup_{0\leq s\leq t}X_s^1}<\infty$. We need to bound $\EE e^{p\sup_{0\leq s\leq t}\left|\bar{R}_s^{\delta}\right|}$ independently of $\delta$.
We have
\begin{eqnarray*}
\EE e^{p\sup_{0\leq s\leq t}\left|\bar{R}_s^{\delta}\right|}&=&\EE\sum_{n=0}^{+\infty}\frac{\left(p\sup_{0\leq s\leq t}\left|\bar{R}_s^{\delta}\right|\right)^n}{n!}
\\&=&1+p\EE\sup_{0\leq s\leq t}\left|\bar{R}_s^{\delta}\right|+\sum_{n=2}^{+\infty} \frac{p^n}{n!}\EE\left(\sup_{0\leq s\leq t}\left|\bar{R}_s^{\delta}\right|\right)^n.
\end{eqnarray*}
By Doob's inequality ($\bar{R}^{\delta}$ is a c\`adl\`ag martingale)
\begin{eqnarray*}
\EE e^{p\sup_{0\leq s\leq t}\left|\bar{R}_s^{\delta}\right|}&\leq&1+p\sqrt{\EE\left(\sup_{0\leq s\leq t}\left|\bar{R}_s^{\delta}\right|\right)^2}+\sum_{n=2}^{+\infty} \frac{p^n}{n!}\left(\frac{n}{n-1}\right)^n\EE\left|\bar{R}_t^{\delta}\right|^n
\\&\leq&1+2p\sqrt{\EE\left|\bar{R}_t^{\delta}\right|^2}+\sum_{n=2}^{+\infty} \frac{p^n}{n!}2^n\EE\left|\bar{R}_t^{\delta}\right|^n
\\&\leq&2p\sqrt{\mbox{Var}\left(\bar{R}_t^{\delta}\right)}+\EE\sum_{n=0}^{+\infty} \frac{p^n}{n!}2^n\left|\bar{R}_t^{\delta}\right|^n
\\&\leq&2p\sqrt{t\int_{\delta<|x|\leq1}x^2\nu(dx)}+\EE e^{2p\left|\bar{R}_t^{\delta}\right|}
\\&\leq&2p\sqrt{t\sigma(1)^2}+\EE e^{2p\bar{R}_t^{\delta}}+\EE e^{-2p\bar{R}_t^{\delta}}.
\end{eqnarray*}
It thus suffices to show that $\sup_{0< \delta\leq 1}\EE e^{\beta \bar{R}_t^{\delta}}<\infty$ for any $\beta\in\R$. Indeed, we have
\begin{equation*}
  \EE e^{\beta \bar{R}_t^{\delta}}=\exp\left\{t\int_{\delta<|x|\leq1}\left(e^{\beta x}-1-\beta x\right)\nu(dx)\right\}
\end{equation*} 
(a moment-generating function of a compensated compound Poisson process). By Taylor's theorem, $e^{\beta x}-1-\beta x=\beta^2x^2e^{\beta \xi}/2$ for any $|x|\leq1$, where $\xi$ is some number between 0 and $x$. This completes the proof, as it implies that
\begin{equation*}
  \EE e^{\beta \bar{R}_t^{\delta}}\leq\exp\left\{\frac{\beta^2t}{2}e^{|\beta|}\int_{|x|\leq1}x^2\nu(dx)\right\}.
\end{equation*} 
\end{proof}

\begin{proof}[Proof of Proposition~\ref{OeM}] 
By the mean value theorem, we have
\begin{eqnarray*}
e^{M_t}-e^{M_t^{\epsilon}}=\left(M_t-M_t^{\epsilon}\right)e^{\bar{M}_t^{\epsilon}},
\end{eqnarray*}
where $\bar{M}_t^{\epsilon}$ is between $M_t$ and $M_t^{\epsilon}$. Let $q$ be defined such that $\frac{1}{p}+\frac{1}{q}=1$.
\begin{eqnarray*}
\EE\left| e^{M_t}- e^{M_t^{\epsilon}}\right|&\leq&\EE\left|M_t-M_t^{\epsilon}\right|e^{\bar{M}_t^{\epsilon}}
\\&\leq&\EE\sup_{0\leq s\leq t}\left|R_s^{\epsilon}\right|e^{\bar{M}_t^{\epsilon}}
\\&\leq&\left(\EE\left(\sup_{0\leq s\leq t}\left|R_s^{\epsilon}\right|\right)^q\right)^{\frac{1}{q}}\left(\EE e^{p\bar{M}_t^{\epsilon}}\right)^{\frac{1}{p}}.
\end{eqnarray*}
Hence, using Doob's inequality and then Proposition~\ref{Rnprop}, we get
\begin{eqnarray*}
\EE\left| e^{M_t}- e^{M_t^{\epsilon}}\right|&\leq&\frac{q}{q-1}\left(\EE\left|R_t^{\epsilon}\right|^q\right)^{\frac{1}{q}}\left(\EE e^{p\bar{M}_t^{\epsilon}}\right)^{\frac{1}{p}}
\\&\leq&C_{p,t}\sigma_0(\epsilon)\left(\EE\left(e^{pM_t}+e^{pM_t^{\epsilon}}\right)\right)^{\frac{1}{p}},
\end{eqnarray*}
where $C_{p,t}$ denotes a constant depending on $p$ and $t$. We conclude the proof by Lemma~\ref{OeMlemme}.
\end{proof}

\subsection{Estimates for cumulative distribution functions}
For cumulative distribution functions, bounds are expected to be bigger. However, in some cases we can get similar results as in Lipschitz case. In the first result below, we assume local boundedness of the probability density function of the L\'evy process $X$ and its supremum process $M$ at a fixed time $t$. The regularity of the probability density function of a L\'evy process is studied in \cite{sato, bertoin}. For the supremum process see \cite{chaumont,dia}.
\begin{proposition}\label{OprobXM}
Let $X$ be a L\'evy process.
\begin{enumerate}
 \item If $b>0$, then
\begin{eqnarray*}
\sup_{x\in\R}\left|\P\left[X_t\geq x\right]-\P\left[X_t^{\epsilon}\geq x\right]\right|\leq \frac{1}{\sqrt{2\pi}b}\sigma(\epsilon).
\end{eqnarray*}
\item If $X_t$ has a locally bounded probability density function and $x\in\R$, then for any $q\in(0,1)$,
\begin{eqnarray*}
\left|\P\left[X_t\geq x\right]-\P\left[X_t^{\epsilon}\geq x\right]\right|\leq C_{x,t,q}\sigma_0(\epsilon)^{1-q},
\end{eqnarray*}
where, here and below, $C_{x,t,q}$ denotes a positive constant depending on $x$, $t$ and $q$.
\item If $M_t$  has a locally bounded probability density function on $(0,+\infty)$ and $x>0$, then for any $q\in(0,1/2)$,
\begin{eqnarray*}
\left|\P\left[M_t\geq x\right]-\P\left[M_t^{\epsilon}\geq x\right]\right|\leq C_{x,t,q}\sigma_0(\epsilon)^{1-q}.
\end{eqnarray*}
\end{enumerate}
\end{proposition}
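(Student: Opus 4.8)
The three parts run on a common engine: the decomposition $X_t=X_t^{\epsilon}+R_t^{\epsilon}$ with $X_t^{\epsilon}$ and $R_t^{\epsilon}$ independent, the elementary bound $\EE|R_t^{\epsilon}|\le(\EE|R_t^{\epsilon}|^2)^{1/2}=\sqrt{t}\,\sigma(\epsilon)$, and the moment estimate $\EE|R_t^{\epsilon}|^p\le K_{p,t}\sigma_0(\epsilon)^p$ from Proposition~\ref{Rnprop}. For part~1 the plan is to exploit the Gaussian component: when $b>0$ we write $X_t^{\epsilon}=bB_t+V$ with $V$ independent of $B_t$, so the law of $X_t^{\epsilon}$ is the convolution of $\mathcal N(0,b^2t)$ with the law of $V$ and hence has a density $f^{\epsilon}$ with $\|f^{\epsilon}\|_{\infty}\le 1/(b\sqrt{2\pi t})$. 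Then the survival function $G^{\epsilon}(y)=\P[X_t^{\epsilon}\ge y]$ is Lipschitz with that constant. Conditioning on $R_t^{\epsilon}$ gives $\P[X_t\ge x]-\P[X_t^{\epsilon}\ge x]=\EE\!\left[G^{\epsilon}(x-R_t^{\epsilon})-G^{\epsilon}(x)\right]$, which is bounded in absolute value by $\|f^{\epsilon}\|_{\infty}\,\EE|R_t^{\epsilon}|\le \sigma(\epsilon)/(b\sqrt{2\pi})$, uniformly in $x$.

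For parts~2 and~3 we no longer control the density of $X_t^{\epsilon}$, but we do control that of $X_t$ (resp.\ $M_t$), so the idea is to turn things around and never integrate against the law of $X_t^{\epsilon}$. Writing $X_t^{\epsilon}=X_t-R_t^{\epsilon}$ gives $\P[X_t^{\epsilon}\ge x]=\EE[\indicatrice_{X_t\ge x+R_t^{\epsilon}}]$, whence
\begin{equation*}
\left|\P[X_t\ge x]-\P[X_t^{\epsilon}\ge x]\right|\le \EE\,\indicatrice_{\{X_t\in I\}},
\end{equation*}
where $I$ is the interval with endpoints $x$ and $x+R_t^{\epsilon}$ (of length $|R_t^{\epsilon}|$). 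Now introduce a cutoff $\delta$ and split the expectation over $\{|R_t^{\epsilon}|\le\delta\}$ and $\{|R_t^{\epsilon}|>\delta\}$. On the first event $I\subset[x-\delta,x+\delta]$, so that term is at most $\P[X_t\in[x-\delta,x+\delta]]\le 2\delta\,C_x$, using local boundedness of the density of $X_t$ on $[x-1,x+1]$ (legitimate once $\delta\le1$). On the second event the indicator is at most $1$, and Markov's inequality together with Proposition~\ref{Rnprop} yields $\P[|R_t^{\epsilon}|>\delta]\le K_{p,t}\sigma_0(\epsilon)^p/\delta^p$ for any $p>0$.

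It then remains to optimize. Choosing $\delta=\sigma_0(\epsilon)^{1-q}$ makes the first term $2C_x\sigma_0(\epsilon)^{1-q}$ and the second $K_{p,t}\sigma_0(\epsilon)^{pq}$; taking $p=(1-q)/q>0$ forces $pq=1-q$, and part~2 follows for every $q\in(0,1)$. For part~3 I would repeat the argument with $M_t$ in place of $X_t$ and $R^{\ast}:=\sup_{0\le s\le t}|R_s^{\epsilon}|$ in place of $R_t^{\epsilon}$: since $|M_t-M_t^{\epsilon}|\le R^{\ast}$, one checks that $\indicatrice_{\{M_t\ge x\}}-\indicatrice_{\{M_t^{\epsilon}\ge x\}}$ is nonzero only when $|M_t-x|\le R^{\ast}$, so $|\P[M_t\ge x]-\P[M_t^{\epsilon}\ge x]|\le \P[|M_t-x|\le R^{\ast}]$. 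The same split gives $2\delta\,C_x+\P[R^{\ast}>\delta]$, but now the tail is handled by Doob's maximal inequality for the martingale $R^{\epsilon}$, namely $\P[R^{\ast}>\delta]\le \EE|R_t^{\epsilon}|/\delta\le \sqrt{t}\,\sigma(\epsilon)/\delta$. With $\delta=\sigma_0(\epsilon)^{1-q}$ this tail is $\sqrt{t}\,\sigma_0(\epsilon)^{q}$, which is $O(\sigma_0(\epsilon)^{1-q})$ exactly when $q\ge 1-q$; this is precisely the restriction $q\in(1/2,1)$ in the statement (and $\delta<x$ holds for small $\epsilon$ since $\sigma_0(\epsilon)\to0$).

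The one genuine difficulty, and the reason a naive conditioning fails, is that $X_t$ and $R_t^{\epsilon}$ (likewise $M_t$ and $R^{\ast}$) are \emph{not} independent, so one cannot simply factor out a density of $X_t^{\epsilon}$; the truncation argument sidesteps this by only ever using the regular law of $X_t$ (resp.\ $M_t$) on the ``small'' event and a cheap moment bound on its complement. The supremum case carries the extra constraint that only the first-moment maximal inequality is invoked, which is exactly what confines part~3 to $q>1/2$ rather than the full range obtained in part~2.
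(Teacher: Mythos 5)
Your proposal is correct, and for parts 1 and 2 it is essentially the paper's own argument in different packaging: part 1 is the same Gaussian-smoothing device (the paper conditions on the pair $\left(R_t^{\epsilon},X_t^{\epsilon}-bB_t\right)$ and uses the bound $1/(b\sqrt{2\pi t})$ on the density of $bB_t$; your Lipschitz survival function $G^{\epsilon}$ is a repackaging of exactly that), and your $\delta$-split over $\{|R_t^{\epsilon}|\leq\delta\}$ and $\{|R_t^{\epsilon}|>\delta\}$ is precisely the content of the paper's Lemma~\ref{probX-Y} combined with Proposition~\ref{Rnprop} — the lemma likewise never uses independence of $X_t$ and $R_t^{\epsilon}$, only the local density of the regular variable plus a Markov bound, so your "genuine difficulty" paragraph describes what the lemma already does. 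The one real divergence is part 3: you bound the tail with the weak-type maximal inequality $\P\left[\sup_{0\leq s\leq t}\left|R_s^{\epsilon}\right|>\delta\right]\leq\EE\left|R_t^{\epsilon}\right|/\delta$, which caps your attainable exponent at $1/2$ and is what produces the restriction $q>1/2$ in your write-up; the paper instead feeds $\EE\left|M_t-M_t^{\epsilon}\right|^p\leq\left(\tfrac{p}{p-1}\right)^p\EE\left|R_t^{\epsilon}\right|^p$ (Doob's $L^p$ inequality, $p>1$) into Lemma~\ref{probX-Y} and gets the bound $\sigma_0(\epsilon)^{p/(p+1)}$ with $p/(p+1)$ spanning $(1/2,1)$ — formally stronger, since smaller exponents then follow for free from the boundedness of $\sigma_0(\epsilon)$ on $(0,1]$, so in the paper's route the stated range $q\in(1/2,1)$ is not forced by the method, whereas in yours it is. Your version buys a first-moment-only argument (no $L^p$ maximal inequality); the paper's buys a better exponent. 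One loose end to tie: your bounds $\P\left[X_t\in[x-\delta,x+\delta]\right]\leq 2C_x\delta$ and the requirement $\delta<x$ in part 3 hold only for $\delta$ below some $\delta_0$, and since the constant must be uniform in $\epsilon\in(0,1]$ you should absorb the regime $\delta=\sigma_0(\epsilon)^{1-q}\geq\delta_0$ via the trivial bound (the left-hand side is at most $1\leq\delta_0^{-1}\delta$); the paper handles this inside Lemma~\ref{probX-Y} by taking $K_x=\max\left\{\sup f,1/\delta_0\right\}$, and you only gesture at it.
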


\begin{lemma}\label{probX-Y}
Let $X$ and $Y$ be two r.v.'s. We assume that $X$ has a bounded density in a neighbourhood of $x\in\R$, and there exists $p>0$ such that $\EE\left|X-Y\right|^p$ is finite. Then there exists a constant $K_x>0$, such that for any $\delta>0$
\begin{eqnarray*}
\left|\P\left[X\geq x\right]-\P\left[Y\geq x\right]\right|\leq K_x\delta+\frac{\EE\left|X-Y\right|^p}{\delta^p}.
\end{eqnarray*}
\end{lemma}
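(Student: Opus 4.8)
The plan is to compare the two tail probabilities by inserting a buffer of width $\delta$ about the threshold $x$, and to control the probability that $X$ and $Y$ end up on opposite sides of that buffer by Markov's inequality. The first observation is simply that $\P[|X-Y|\geq\delta]\leq \EE|X-Y|^p/\delta^p$, which already supplies the second term on the right-hand side; the remaining work is to absorb the leftover discrepancy into a term of order $\delta$ using the local boundedness of the density of $X$.

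First I would produce two one-sided estimates, taking care that \emph{both} are expressed through the law of $X$ alone, since $Y$ is not assumed to have a density. For the upper bound, on the event $\{Y\geq x\}$ either $X\geq x-\delta$, or else $Y-X>\delta$; hence $\P[Y\geq x]\leq \P[X\geq x-\delta]+\P[|X-Y|>\delta]$. For the lower bound, on the event $\{X\geq x+\delta\}$ either $Y\geq x$, or else $X-Y>\delta$; hence $\P[Y\geq x]\geq \P[X\geq x+\delta]-\P[|X-Y|>\delta]$. Writing $\P[X\geq x-\delta]=\P[X\geq x]+\P[x-\delta\leq X<x]$ and $\P[X\geq x+\delta]=\P[X\geq x]-\P[x\leq X<x+\delta]$, subtracting $\P[X\geq x]$, and invoking Markov's inequality, these become
\[
\P[Y\geq x]-\P[X\geq x]\leq \P[x-\delta\leq X<x]+\frac{\EE|X-Y|^p}{\delta^p}
\]
and
\[
\P[X\geq x]-\P[Y\geq x]\leq \P[x\leq X<x+\delta]+\frac{\EE|X-Y|^p}{\delta^p}.
\]

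Next I would invoke the hypothesis: let $C$ bound the density of $X$ on a neighbourhood $(x-\eta,x+\eta)$ of $x$. For $\delta\leq\eta$ both $\P[x-\delta\leq X<x]$ and $\P[x\leq X<x+\delta]$ are at most $C\delta$, so both displays give the asserted bound with $K_x=C$. For $\delta>\eta$ the inequality is trivial as soon as $K_x$ is taken large enough that $K_x\eta\geq1$, because the left-hand side $|\P[X\geq x]-\P[Y\geq x]|$ never exceeds $1$; thus $K_x=\max(C,1/\eta)$ works uniformly in $\delta>0$.

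The only delicate point, and the step I would treat as the main obstacle, is the asymmetry of the hypotheses: since only $X$ is assumed to possess a local density, the buffer must be placed so that in \emph{both} directions the leftover mass is the probability that $X$ falls in an interval of length $\delta$ near $x$, never a corresponding statement about $Y$. The particular decomposition above is chosen precisely to secure this; beyond it, the argument is just Markov's inequality together with the elementary splitting of $K_x$ into the small-$\delta$ and large-$\delta$ regimes.
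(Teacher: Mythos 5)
Your proof is correct and follows essentially the same route as the paper's: you reduce the discrepancy to the interval probabilities $\P\left[x-\delta\leq X<x\right]$ and $\P\left[x\leq X<x+\delta\right]$ plus a Markov bound on $\P\left[|X-Y|>\delta\right]$, exactly as the paper does via its decomposition into the joint events $\{X\geq x,\,Y<x\}$ and $\{X<x,\,Y\geq x\}$. Even your handling of the constant, taking $K_x=\max(C,1/\eta)$ to cover the regime $\delta>\eta$ where the left-hand side is trivially at most $1$, mirrors the paper's choice $K_x=\max\left\{\sup_{x-\delta_0\leq t\leq x+\delta_0}f(t),\,1/\delta_0\right\}$.
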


\begin{proof}
We have
\begin{eqnarray*}
\left|\P\left[X\geq x\right]-\P\left[Y\geq x\right]\right|&=&\left|\P\left[X\geq x,Y<x\right]-\P\left[X<x,Y\geq x\right]\right|.
\end{eqnarray*}
We will study the above terms on the right of the equality.
\begin{eqnarray*}
\P\left[X\geq x,Y<x\right]&=&\P\left[x\leq X<x+\left(X-Y\right)\right]\\
&=&\P\left[x\leq X<x+\left(X-Y\right),\left|X-Y\right|\leq\delta\right] 
\\&&+\P\left[x\leq X<x+\left(X-Y\right),\left|X-Y\right|>\delta\right]\\
&\leq&\P\left[x\leq X<x+\delta\right] +\P\left[\left|X-Y\right|>\delta\right].
\end{eqnarray*}
Suppose that $X$ has a bounded density $f$ in the interval $[x-\delta_0,x+\delta_0]$, $\delta_0>0$ fixed, and let
\begin{eqnarray*}
K_x=\max\left\{\sup_{x-\delta_0\leq t\leq x+\delta_0}f(t),\frac{1}{\delta_0}\right\}.
\end{eqnarray*}
By considering the cases $\delta<\delta_0$ and $\delta\geq\delta_0$ separately, it is readily checked
that
\begin{eqnarray*}
\P\left[x\leq X<x+\delta\right]\leq K_x\delta,
\end{eqnarray*}
for any $\delta>0$. Thus, using Markov's inequality, we get
\begin{eqnarray*}
\P\left[X\geq x,Y<x\right]&\leq& K_x\delta+\frac{\EE\left|X-Y\right|^p}{\delta^p}.
\end{eqnarray*}
Similarly, using $\P\left[x-\delta\leq X<x\right]\leq K_x\delta$, it holds that 
\begin{eqnarray*}
\P\left[X<x,Y\geq x\right]&\leq& K_x\delta+\frac{\EE\left|X-Y\right|^p}{\delta^p}.
\end{eqnarray*}
Lemma~\ref{probX-Y} is thus established.
\end{proof}

\begin{proof}[Proof of Proposition~\ref{OprobXM}] 
We have
\begin{equation}\label{OprobXM_eq1}
\left|\P\left[X_t\geq x\right]-\P\left[X_t^{\epsilon}\geq x\right]\right|=\left|\P\left[X_t\geq x,X_t^{\epsilon}<x\right]-\P\left[X_t<x,X_t^{\epsilon}\geq x\right]\right|.
\end{equation}
It holds that
\begin{eqnarray*}
\P\left[X_t\geq x,X_t^{\epsilon}<x\right]&=&\P\left[x-\left(X_t-X_t^{\epsilon}\right)\leq X_t^{\epsilon}<x\right]\\
&=&\P\left[x-R_t^{\epsilon}\leq b B_t+\left(X_t^{\epsilon}-b B_t\right)<x\right].
\end{eqnarray*}
Note that $b B_t$ is independent of $X_t^{\epsilon}-b B_t$ and $R_t^{\epsilon}$, and $\frac{1}{\sqrt{2\pi t}b}$ is an upper bound of the probability density function of $b B_t$. Then, by conditioning on the pair $\left(R_t^{\epsilon},X_t^{\epsilon}-b B_t\right)$, it can be concluded that
\begin{eqnarray*}
\P\left[x-R_t^{\epsilon}\leq b B_t+\left(X_t^{\epsilon}-b B_t\right)< x\right]&\leq&\frac{1}{\sqrt{2\pi t}b}\EE\left|R_t^{\epsilon}\right|.
\end{eqnarray*}
Therefore, using that $\EE\left|R_t^{\epsilon}\right|\leq \sigma(\epsilon)\sqrt{t}$,
\begin{eqnarray*}
\P\left[X_t\geq x,X_t^{\epsilon}<x\right]&\leq&\frac{1}{\sqrt{2\pi}b}\sigma(\epsilon).
\end{eqnarray*}
Similarly
\begin{eqnarray*}
\P\left[X_t< x,X_t^{\epsilon}\geq x\right]&=&\P\left[x\leq X_t^{\epsilon}<x-\left(X_t-X_t^{\epsilon}\right)\right]\\
&=&\P\left[x\leq \sigma B_t+\left(X_t^{\epsilon}-\sigma B_t\right)<x-R_t^{\epsilon}\right]\\
&\leq&\frac{1}{\sqrt{2\pi}\sigma}\sigma(\epsilon).
\end{eqnarray*}
Hence part 1 of the proposition follows from \eqref{OprobXM_eq1}.

We now prove part 2 of the proposition. Let $p>0$. By Lemma~\ref{probX-Y} followed by Proposition~\ref{Rnprop}, there exist positive constants $K_{x,t}$ and $K_{p,t}$ such that
\begin{eqnarray*}
\left|\P\left[X_t\geq x\right]-\P\left[X_t^{\epsilon}\geq x\right]\right|&\leq& K_{x,t}\delta+\frac{\EE\left|X_t-X_t^{\epsilon}\right|^p}{\delta^p}
\\&=&K_{x,t}\delta+\frac{\EE\left|R_t^{\epsilon}\right|^p}{\delta^p}
\\&\leq& K_{x,t}\delta+K_{p,t}\frac{\sigma_0(\epsilon)^p}{\delta^p}
\end{eqnarray*}
for any $\delta>0$. Choosing $\delta=\sigma_0(\epsilon)^\frac{p}{p+1}$ yields
\begin{eqnarray*}
\left|\P\left[X_t\geq x\right]-\P\left[X_t^{\epsilon}\geq x\right]\right|&\leq&2\max\left(K_{x,t},K_{p,t}\right)\sigma_0(\epsilon)^\frac{p}{p+1},
\end{eqnarray*}
and so the result follows since $p/(p+1)$ can be chosen arbitrarily in $(0,1)$. 

We now prove part 3 of the proposition. Let $p>1$. By Lemma~\ref{probX-Y}, there exists a constant $K_{x,t}^{'}>0$ such that
\begin{eqnarray*}
\left|\P\left[M_t\geq x\right]-\P\left[M_t^{\epsilon}\geq x\right]\right|\leq K^{'}_{x,t}\delta+\frac{\EE\left|M_t-M_t^{\epsilon}\right|^p}{\delta^p}
\end{eqnarray*}
for any $\delta>0$. On the other hand
\begin{eqnarray*}
\EE\left|M_t-M_t^{\epsilon}\right|^p&\leq&\EE\left(\sup_{0\leq s\leq t}\left|X_s-X_s^{\epsilon}\right|\right)^p\\
&=&\EE\left(\sup_{0\leq s\leq t}\left|R_s^{\epsilon}\right|\right)^p.
\end{eqnarray*}
So by Doob's inequality, we have, using the constant $K_{p,t}$ from part 2,
\begin{eqnarray*}
\EE\left|M_t-M_t^{\epsilon}\right|^p&\leq&\left(\frac{p}{p-1}\right)^p\EE\left|R_t^{\epsilon}\right|^p
\\&\leq&K_{p,t}\left(\frac{p}{p-1}\right)^p\sigma_0(\epsilon)^p.
\end{eqnarray*}
Part 3 of the proposition then follows by choosing $\delta=\sigma_0(\epsilon)^{\frac{p}{p+1}}$.
\end{proof}
\section{Approximation of the compensated sum of small jumps by a Brownian motion}
\label{sec:brownianapproximation}
In this section we will replace $R^{\epsilon}$ by a Brownian motion. This method gives better results, subject to a convergence assumption. In fact, Asmussen and Rosinski proved (\cite{asmussen-rosinski}, Theorem 2.1) that, if $X$ is a L\'evy process, then the process $\sigma(\epsilon)^{-1}R^{\epsilon}$ converges in distribution to a standard Brownian motion, when $\epsilon\rightarrow0$, if and only if for any $k>0$
\begin{equation}\label{asmussen_rosinski_eq}
\lim_{\epsilon\rightarrow 0}\frac{\sigma\left(k\sigma(\epsilon)\wedge\epsilon\right)}{\sigma\left(\epsilon\right)}=1.
\end{equation}
Condition \eqref{asmussen_rosinski_eq} is implied by the condition 
\begin{equation}\label{asmussen_rosinski_rmq_eq}
\lim_{\epsilon\rightarrow 0}\frac{\sigma(\epsilon)}{\epsilon}=+\infty.
\end{equation}
The conditions \eqref{asmussen_rosinski_eq} and \eqref{asmussen_rosinski_rmq_eq} are equivalent if $\nu$ does not have atoms in some neighbourhood of zero (\cite{asmussen-rosinski}, Proposition 2.1).

\subsection{Estimates for smooth functions}
The errors resulting from Brownian approximation have not been much studied in the literature, at least theoretically. There are some results which we can find in \cite{cont-tankov,cont-voltchkova05}. 
\begin{proposition}\label{oX-2}
Let $X$ be an infinite activity L\'evy process and $t>0$.
\begin{enumerate}
\item If $f\in C^{1}(\R)$ and satisfies $\EE\left|f{'}\left(X_t^{\epsilon}\right)\right|<\infty$, and if there exists $\beta>1$ such that $\left(\sup_{\epsilon\in(0,1]}\EE\left|f{'}\left(X_t^{\epsilon}+\theta \sigma(\epsilon)\hat{W}_t\right)-f{'}\left(X_t^{\epsilon}\right)\right|^{\beta}\right)^{\frac{1}{\beta}}$ and \\ $\left(\sup_{\epsilon\in(0,1]}\EE\left|f^{'}\left(X_t^{\epsilon}+\theta R_t^{\epsilon}\right)-f{'}\left(X_t^{\epsilon}\right)\right|^{\beta}\right)^{\frac{1}{\beta}}$ are finite and integrable with respect to $\theta$ on $[0,1]$, then 
\begin{eqnarray*}
\EE\left(f\left(X_t\right) -f\left(\hat{X}_t^{\epsilon}\right)\right)= o\left(\sigma_0(\epsilon)\right).
\end{eqnarray*}
\item If $f\in C^{2}(\R)$ and satisfies $\EE\left|f{'}\left(X_t^{\epsilon}\right)\right|+\EE\left|f{''}\left(X_t^{\epsilon}\right)\right|<\infty$, and if there exists $\beta>1$ such that $\left(\sup_{\epsilon\in(0,1]}\EE\left|f{''}\left(X_t^{\epsilon}+\theta \sigma(\epsilon)\hat{W}_t\right)-f{''}\left(X_t^{\epsilon}\right)\right|^{\beta}\right)^{\frac{1}{\beta}}$ and\\ $\left(\sup_{\epsilon\in(0,1]}\EE\left|f{''}\left(X_t^{\epsilon}+\theta R_t^{\epsilon}\right)-f{''}\left(X_t^{\epsilon}\right)\right|^{\beta}\right)^{\frac{1}{\beta}}$ are finite and integrable with respect to $\theta$ on $[0,1]$, then 
\begin{eqnarray*}
\EE\left(f\left(X_t\right) -f\left(\hat{X}_t^{\epsilon}\right)\right)=o\left(\sigma_0(\epsilon)^2\right).
\end{eqnarray*}
\end{enumerate}
\end{proposition}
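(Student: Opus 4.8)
The plan is to reduce each part to the corresponding part of Proposition~\ref{oX-} by inserting the intermediate process $X_t^{\epsilon}$ and exploiting the defining relation $\hat{X}_t^{\epsilon}=X_t^{\epsilon}+\sigma(\epsilon)\hat{W}_t$. First I would write the telescoping decomposition
\begin{equation*}
\EE\left(f\left(X_t\right)-f\left(\hat{X}_t^{\epsilon}\right)\right)=\EE\left(f\left(X_t\right)-f\left(X_t^{\epsilon}\right)\right)-\EE\left(f\left(\hat{X}_t^{\epsilon}\right)-f\left(X_t^{\epsilon}\right)\right).
\end{equation*}
The first term on the right is controlled directly by Proposition~\ref{oX-}, using the hypotheses on the increments $X_t^{\epsilon}+\theta R_t^{\epsilon}$ (these are exactly the second of the two integrability conditions assumed here). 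The whole content of the new proposition is therefore to show that the second term, which measures the error of the Brownian substitution itself, obeys the same $o(\sigma_0(\epsilon))$ (resp. $o(\sigma_0(\epsilon)^2)$) bound.

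To estimate $\EE\left(f\left(\hat{X}_t^{\epsilon}\right)-f\left(X_t^{\epsilon}\right)\right)$ I would run the argument of Proposition~\ref{oX-} verbatim, but with $R_t^{\epsilon}$ replaced by the Gaussian perturbation $\sigma(\epsilon)\hat{W}_t$. For part~1, the analogue of \eqref{eqprop2} reads
\begin{equation*}
f\left(\hat{X}_t^{\epsilon}\right)-f\left(X_t^{\epsilon}\right)=\int_0^1\left(f'\left(X_t^{\epsilon}+\theta\sigma(\epsilon)\hat{W}_t\right)-f'\left(X_t^{\epsilon}\right)\right)\sigma(\epsilon)\hat{W}_t\,d\theta+f'\left(X_t^{\epsilon}\right)\sigma(\epsilon)\hat{W}_t,
\end{equation*}
and since $\hat{W}_t$ is independent of $X_t^{\epsilon}$ with mean zero, the last term has expectation zero. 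The integral term is handled by H\"older and Lyapunov inequalities exactly as before: the first integrability hypothesis gives uniform integrability (hence, with almost-sure convergence of $f'\left(X_t^{\epsilon}+\theta\sigma(\epsilon)\hat{W}_t\right)-f'\left(X_t^{\epsilon}\right)$ to $0$ as $\epsilon\to0$, convergence to $0$ in $L^{\alpha}$ for $1<\alpha<\beta$), while the role played by $\EE|R_t^{\epsilon}|^{\alpha/(\alpha-1)}$ is now played by $\sigma(\epsilon)^{\alpha/(\alpha-1)}\EE|\hat{W}_t|^{\alpha/(\alpha-1)}$, which is $O(\sigma(\epsilon))=O(\sigma_0(\epsilon))$ after raising to the power $(\alpha-1)/\alpha$. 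Part~2 is identical with the second-order Taylor expansion: the main term $\tfrac{1}{2}\EE f''(X_t^{\epsilon})\,\EE(\sigma(\epsilon)\hat{W}_t)^2=\tfrac{\sigma(\epsilon)^2 t}{2}\EE f''(X_t^{\epsilon})$ now coincides exactly with the main term coming from the $R_t^{\epsilon}$-expansion (both equal $\tfrac{\sigma(\epsilon)^2 t}{2}\EE f''(X_t^{\epsilon})$, since $\operatorname{Var}(\sigma(\epsilon)\hat{W}_t)=\sigma(\epsilon)^2 t=\operatorname{Var}(R_t^{\epsilon})$), so these two leading contributions cancel in the difference and only the $o(\sigma_0(\epsilon)^2)$ remainders survive.

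The cancellation of the leading terms is the heart of the matter: it is precisely because $\hat{W}$ is chosen with the same variance $\sigma(\epsilon)^2 t$ as $R_t^{\epsilon}$ that the second-order terms match and the error drops to $o(\sigma_0(\epsilon)^2)$ in part~2. The step I expect to be the main obstacle is verifying that the remainder from the Brownian perturbation is genuinely $o(\sigma_0(\epsilon)^2)$ rather than merely $O(\sigma_0(\epsilon)^2)$; this requires the dominated-convergence argument to push the rescaled $L^{\alpha}$ norm of $f''\left(X_t^{\epsilon}+\theta\sigma(\epsilon)\hat{W}_t\right)-f''\left(X_t^{\epsilon}\right)$ to zero, which in turn relies on the almost-sure continuity of $\theta\mapsto$ the perturbed argument as $\epsilon\to0$ together with the first integrability hypothesis to supply uniform integrability. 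Once that convergence is established for both the Gaussian term and (via Proposition~\ref{oX-}) the jump term, the two $o(\sigma_0(\epsilon)^2)$ estimates combine additively and the proposition follows.
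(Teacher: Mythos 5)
Your proposal is correct and is essentially the paper's own proof: the paper likewise splits the error through the intermediate process $X_t^{\epsilon}$, applies Proposition~\ref{oX-} to $\EE\left(f\left(X_t\right)-f\left(X_t^{\epsilon}\right)\right)$, and reruns that proof with $R_t^{\epsilon}$ replaced by $\sigma(\epsilon)\hat{W}_t$ to get the matching expansion $\frac{\sigma(\epsilon)^2t}{2}\EE f^{''}\left(X_t^{\epsilon}\right)+o\left(\sigma_0(\epsilon)^2\right)$, so the leading terms cancel exactly as you say. Your observation that the equality of variances $\mathrm{Var}\left(\sigma(\epsilon)\hat{W}_t\right)=\sigma(\epsilon)^2t=\mathrm{Var}\left(R_t^{\epsilon}\right)$ drives the cancellation is precisely the point of the construction, and no further argument is needed.
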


Examples of functions satisfying the above conditions are noted after Proposition~\ref{oX-}.

\begin{proof}
We consider only part 2. The proof for part 1 is similar. By Proposition~\ref{oX-}, we have
\begin{eqnarray*}
\EE\left(f\left(X_t\right) -f\left(X_t^{\epsilon}\right)\right)= \frac{\sigma(\epsilon)^2t}{2}\EE f^{''}\left(X_t^{\epsilon}\right)+o\left(\sigma_0(\epsilon)^2\right).
\end{eqnarray*}
On the other hand, using the same reasoning as in the proof of Proposition~\ref{oX-} (we will replace $R^{\epsilon}$ by $\sigma(\epsilon)\hat{W}$) we get
\begin{eqnarray*}
\EE\left(f(X_t^{\epsilon}+\sigma(\epsilon)\hat{W_t}) -f\left(X_t^{\epsilon}\right)\right)= \frac{\sigma(\epsilon)^2t}{2}\EE f^{''}\left(X_t^{\epsilon}\right)+o\left(\sigma_0(\epsilon)^2\right).
\end{eqnarray*}
Hence
\begin{eqnarray*}
\EE\left(f(X_t)-f(\hat{X}_t^{\epsilon})\right)=o\left(\sigma_0(\epsilon)^2\right).
\end{eqnarray*}
\end{proof}

The combination of Proposition 6.2 of \cite{cont-tankov} and the Spitzer's identity for L\'evy processes (Proposition 1 of \cite{dia-lamberton}) leads to the following result.
\begin{proposition}\label{petitssautsprop}
Let $X$ be an integrable infinite activity L\'evy process. Then
\begin{eqnarray*}
\left|\EE M_t -\EE \hat{M}_t^{\epsilon}\right|\leq 33\sigma(\epsilon)\rho(\epsilon)\left(1+\log \left(\frac{\sqrt{t}}{2\rho(\epsilon)}\right)\right),
\end{eqnarray*}
where $\rho(\epsilon)=\sigma(\epsilon)^{-3}\int_{|x|\leq\epsilon}|x|^3\nu(dx)$.
\end{proposition}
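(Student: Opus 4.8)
The plan is to turn the supremum comparison into a one--dimensional time integral via Spitzer's identity, and then to estimate the integrand by comparing $R^\epsilon$ with its variance--matched Gaussian surrogate $\sigma(\epsilon)\hat W$ in two complementary regimes. First I would apply Spitzer's identity (Proposition~1 of \cite{dia-lamberton}) separately to $X$ and to $\hat X^\epsilon$, giving $\EE M_t=\int_0^t s^{-1}\EE X_s^+\,ds$ and $\EE\hat M_t^\epsilon=\int_0^t s^{-1}\EE(\hat X_s^\epsilon)^+\,ds$, hence
\[
\EE M_t-\EE\hat M_t^\epsilon=\int_0^t\frac{\EE X_s^+-\EE(\hat X_s^\epsilon)^+}{s}\,ds .
\]
This reduces everything to estimating the integrand $g(s):=\EE X_s^+-\EE(\hat X_s^\epsilon)^+$ together with the singular weight $1/s$ at the origin. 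Because $X_s=X_s^\epsilon+R_s^\epsilon$ and $\hat X_s^\epsilon=X_s^\epsilon+\sigma(\epsilon)\hat W_s$ share the common independent part $X_s^\epsilon$, and since $\EE R_s^\epsilon=\sigma(\epsilon)\EE\hat W_s=0$ so that $\EE X_s=\EE\hat X_s^\epsilon$, the identity $y^+=(y+|y|)/2$ lets me rewrite $g(s)=\tfrac12(\EE|X_s|-\EE|\hat X_s^\epsilon|)$; in either form $g(s)$ is exactly a measure of the distance between $R_s^\epsilon$ and a centred Gaussian of the same variance $s\sigma(\epsilon)^2$.

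Next I would bound $|g(s)|$ in two ways. For the small--time regime I use only that $y\mapsto y^+$ is $1$--Lipschitz and that $R_s^\epsilon,\hat W_s$ are independent: $|g(s)|\le\EE|R_s^\epsilon-\sigma(\epsilon)\hat W_s|\le\sqrt{\mathrm{Var}(R_s^\epsilon)+\sigma(\epsilon)^2 s}=\sigma(\epsilon)\sqrt{2s}$ (the refinement through $\tfrac12(\EE|X_s|-\EE|\hat X_s^\epsilon|)$ saves a factor $\tfrac12$ if one wants the sharp constant). For the large--time regime I invoke Proposition~6.2 of \cite{cont-tankov}, the Berry--Esseen/Lyapunov--type comparison for the small--jump part: the relevant Lyapunov ratio of $R_s^\epsilon$ is $\sigma(\epsilon)^{-3}s^{-1/2}\int_{|x|\le\epsilon}|x|^3\nu(dx)=\rho(\epsilon)/\sqrt s$, and multiplying it by the scale $\sigma(\epsilon)\sqrt s$ produces a bound of the form $|g(s)|\le C\,\sigma(\epsilon)\rho(\epsilon)$ that is \emph{uniform in $s$}. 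The cancellation of the $\sqrt s$ factors is the decisive feature: it is what makes the subsequent $\int ds/s$ produce a logarithm rather than a power of $t$.

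Finally I would split $[0,t]$ at the crossover $s_0=4\rho(\epsilon)^2$, where the two estimates balance and which is chosen so that the logarithm comes out with the stated argument. On $[0,s_0]$ the Lipschitz bound gives $\int_0^{s_0}s^{-1}\sigma(\epsilon)\sqrt{2s}\,ds=2\sigma(\epsilon)\sqrt{2s_0}=O(\sigma(\epsilon)\rho(\epsilon))$, while on $[s_0,t]$ the uniform bound gives $\int_{s_0}^t s^{-1}C\sigma(\epsilon)\rho(\epsilon)\,ds=C\sigma(\epsilon)\rho(\epsilon)\log(t/s_0)=2C\sigma(\epsilon)\rho(\epsilon)\log\!\big(\sqrt t/(2\rho(\epsilon))\big)$. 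Adding the two pieces yields the announced shape $\mathrm{const}\cdot\sigma(\epsilon)\rho(\epsilon)\big(1+\log(\sqrt t/(2\rho(\epsilon)))\big)$, the constant term $1$ absorbing the small--time contribution.

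I expect the main obstacle to be the large--time estimate, on two counts. First, Proposition~6.2 of \cite{cont-tankov} is stated for a fixed test function, whereas here I need it for the shifted family $y\mapsto(a+y)^+$ uniformly over the shift $a=X_s^\epsilon$; the point is that the controlling quantity (the third--moment ratio) is independent of $a$, so after conditioning on the independent common part $X_s^\epsilon$ one may apply the bound pointwise in $a$ and then integrate it out. Second, extracting from that proposition a constant that is genuinely uniform in $s$ and explicit enough to combine with the small--time term and the logarithmic integration into the final numerical value $33$ requires careful bookkeeping. The degenerate case $4\rho(\epsilon)^2\ge t$, in which only the small--time bound is used over all of $[0,t]$, must also be checked separately, but that verification is routine.
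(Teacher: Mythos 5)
Your proposal follows the paper's proof essentially step for step: Spitzer's identity reduces the problem to bounding $\left|\EE X_s^{+}-\EE\left(\hat{X}_s^{\epsilon}\right)^{+}\right|$, which the paper likewise estimates by a coupling/Lipschitz bound of order $\sigma(\epsilon)\sqrt{s}$ near $s=0$ and by Proposition 6.2 of \cite{cont-tankov} with $f(x)=x^{+}$ and $A<16.5$ away from $0$ (that proposition directly compares $\EE f(X_s)$ with $\EE f(\hat{X}_s^{\epsilon})$ for Lipschitz $f$, so the conditioning on $X_s^{\epsilon}$ you anticipate as an obstacle is unnecessary), before splitting at the same crossover $\delta=4\rho(\epsilon)^2$ to obtain $33\,\sigma(\epsilon)\rho(\epsilon)\left(1+\log\left(\sqrt{t}/\left(2\rho(\epsilon)\right)\right)\right)$. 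Your only deviations are cosmetic: the Cauchy--Schwarz bound $\sigma(\epsilon)\sqrt{2s}$ in place of the paper's triangle-inequality bound $\left(1+\sqrt{2/\pi}\right)\sqrt{s}\,\sigma(\epsilon)$ (in fact slightly sharper), and the unused rewriting via $y^{+}=(y+|y|)/2$.
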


\begin{remark}
Under condition \eqref{asmussen_rosinski_rmq_eq}, we have $\lim_{\epsilon\rightarrow 0}\rho(\epsilon)=0$ and, in turn,
\begin{eqnarray*}
\sigma(\epsilon)\rho(\epsilon)\left(1+\log \left(\frac{\sqrt{t}}{2\rho(\epsilon)}\right)\right)=o\left(\sigma(\epsilon)\right).
\end{eqnarray*}
\end{remark}

\begin{proof}
Let $\delta\in(0,t)$. Using Spitzer's identity for L\'evy processes, we have
\begin{eqnarray*}
\left|\EE M_t -\EE \hat{M}_t^{\epsilon}\right|&=&\left|\int_0^t\frac{\EE X_s^{+}}{s}ds -\int_0^t\frac{\EE (\hat{X}_s^{\epsilon})^{+}}{s}ds\right|
\\&\leq& \int_0^{\delta}\left|\EE X_s^{+}-\EE (\hat{X}_s^{\epsilon})^{+}\right|\frac{ds}{s}+\int_{\delta}^t\left|\EE X_s^{+}-\EE (\hat{X}_s^{\epsilon})^{+}\right|\frac{ds}{s}.
\end{eqnarray*}
On the one hand,
\begin{eqnarray*}
\left|\EE X_s^{+}-\EE (\hat{X}_s^{\epsilon})^{+}\right|&\leq&\EE\left| (X_s^{\epsilon}+R_s^{\epsilon})^{+}- (X_s^{\epsilon}+\sigma(\epsilon)\hat{W}_s)^{+}\right|
\\&\leq&\EE\left| R_s^{\epsilon}-\sigma(\epsilon)\hat{W}_s\right|
\\&\leq& \left(1+\sqrt{\frac{2}{\pi}}\right)\sqrt{s}\sigma(\epsilon).
\end{eqnarray*}
On the other hand, it follows from Proposition 6.2 of \cite{cont-tankov} that
\begin{eqnarray*}
\left|\EE X_s^{+}-\EE (\hat{X}_s^{\epsilon})^{+}\right|\leq A\sigma(\epsilon)\rho(\epsilon),
\end{eqnarray*}
with $A<16.5$ (consider the function $f(x) = x^{+}$). Therefore,
\begin{eqnarray*}
\left|\EE M_t -\EE \hat{M}_t^{\epsilon}\right|&\leq&2\left(1+\sqrt{\frac{2}{\pi}}\right)\sigma(\epsilon)\sqrt{\delta}+A\sigma(\epsilon)\rho(\epsilon)\log \left(\frac{t}{\delta}\right)
\\&\leq&16.5\sigma(\epsilon)\left(\sqrt{\delta}+\rho(\epsilon)\log \left(\frac{t}{\delta}\right)\right).
\end{eqnarray*}
The last expression is minimal for $\delta=4\rho(\epsilon)^2$, and so the desired result follows by substitution.
\end{proof}

\subsection{Estimates by Skorokhod embedding}
We will use a powerful tool to prove the results of this section. This is the Skorokhod embedding theorem. We will begin by defining some useful notations.
\begin{definition}
We define
\begin{eqnarray*}
&&\beta(\epsilon)=\frac{\int_{|x|\leq\epsilon}x^4\nu(dx)}{\left(\sigma_0(\epsilon)\right)^4}, \ \beta_{p,\theta}^t(\epsilon)=\beta(\epsilon)^{\frac{p\theta}{p+4\theta}}\left[\left(\log\left(\frac{t}{\beta(\epsilon)^{\frac{2\theta}{p+4\theta}}}+3\right)\right)^p+1\right],
\\&&\beta_1^t(\epsilon)=\beta(\epsilon)^{\frac{1}{6}}\left(\sqrt{\log\left(\frac{t}{\beta(\epsilon)^{\frac{1}{3}}}+3\right)}+1\right), \ \beta_2^t(\epsilon)=\beta(\epsilon)^{\frac{1}{4}}\left(\log\left(\frac{t}{\beta(\epsilon)^{\frac{1}{4}}}+3\right)+1\right).
\end{eqnarray*}
\end{definition}

\begin{remark}
Note that under condition \eqref{asmussen_rosinski_rmq_eq}, we have $\lim_{\epsilon\rightarrow 0}\beta(\epsilon)=0$.
\end{remark}

The proof of Proposition~\ref{petitssautsprop} cannot be extended to the Lipschitz functions, because the reformulation of the Spitzer identity for L\'evy processes cannot be applied in that case. We have to use another method. Define
\begin{eqnarray*}
 V_{j,n}=R^{\epsilon}_{\frac{jt}{n}}-R^{\epsilon}_{(j-1)\frac{t}{n}},
\end{eqnarray*}
$j=1,\dots,n$, so that $R^{\epsilon}_{kt/n}=\sum_{j=1}^k V_{j,n}$, $k=1,\dots,n$. The $V_{j,n}$ are i.i.d. with the same distribution as $R^{\epsilon}_{t/n}$, hence $\EE\left(V_{j,n}\right)=0$ and $\mbox{Var}\left(V_{j,n}\right)=\sigma(\epsilon)^{2}t/n$. Thus, by Skorokhod's embedding theorem (Theorem 1 of \cite{skorokhod}, see p. 163), there exist positive i.i.d. r.v.'s  $\tau_j$, $j=1,\dots,n$, and a standard Brownian motion, $\hat{B}$, such that the (partial sums)
$R^{\epsilon}_{kt/n}$ and the $\hat{B}_{\tau_1+\dots+\tau_k}$, $k=1,\dots,n$, have the same joint distributions; moreover, $\EE\left( \tau_1\right)=  \mbox{Var}\left(V_{1,n}\right)$ and 
\begin{equation}\label{moments_tau}
\EE \tau_1^2\leq 4\EE V_{1,n}^{4}.
\end{equation}
Further, note that the $\sigma(\epsilon)\hat{W}_{kt/n}$ and $\hat{B}_{\sigma(\epsilon)^2kt/n}$, $k=1,\dots,n$, have
the same joint distributions. Set
\begin{eqnarray*}
T_k=\tau_1+\dots+\tau_k,\ T_k^{\epsilon}=\frac{\sigma(\epsilon)^{2}kt}{n}.
\end{eqnarray*}
This setting will be used in all of the subsequent results.

\begin{theorem}\label{petitssauts}
Let $X$ be an integrable infinite activity L\'evy process, and $f$ a Lipschitz function. Then
\begin{eqnarray*}
\left|\EE f\left(M_t\right) -\EE f\left(\hat{M}_t^{\epsilon}\right)\right|\leq C_t\sigma_0(\epsilon)\beta_1^t(\epsilon),
\end{eqnarray*}
where $C_t$ is a positive constant independent of $\epsilon$.
\end{theorem}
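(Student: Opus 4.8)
The plan is to compare $M_t$ with $\hat M_t^\epsilon$ through the discrete-time skeleton of the supremum, using the Skorokhod coupling between $R^\epsilon$ and the Brownian motion $\hat B$ evaluated at the random times $T_k$, and then to match the Brownian-time grid $T_k$ against the deterministic grid $T_k^\epsilon$ at which $\sigma(\epsilon)\hat W$ lives. Since $f$ is Lipschitz, it suffices to bound $\EE\left|M_t - \hat M_t^\epsilon\right|$, and in fact one controls the difference of the discrete maxima and then absorbs the (continuous-time) discretization error of both suprema separately.

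First I would introduce the common drift-plus-Brownian part $X^\epsilon$ (which is shared by $X$ and $\hat X^\epsilon$) and write
\begin{eqnarray*}
M_t-\hat M_t^\epsilon &=& \sup_{0\le s\le t}\left(X_s^\epsilon+R_s^\epsilon\right)-\sup_{0\le s\le t}\left(X_s^\epsilon+\sigma(\epsilon)\hat W_s\right),
\end{eqnarray*}
so that $\bigl|M_t-\hat M_t^\epsilon\bigr|\le\sup_{0\le s\le t}\bigl|R_s^\epsilon-\sigma(\epsilon)\hat W_s\bigr|$ in the same coupling. The Skorokhod embedding gives $R^\epsilon_{kt/n}=\hat B_{T_k}$ in joint law, with $\EE\tau_1=\sigma(\epsilon)^2 t/n$ and the fourth-moment control \eqref{moments_tau}; the clean way to proceed is to pass to the skeleton on the grid $\{kt/n\}$, bound the gap between the running maximum of $R^\epsilon$ and its skeleton-maximum (and likewise for $\sigma(\epsilon)\hat W$) by an oscillation term, and then estimate $\max_{1\le k\le n}\bigl|\hat B_{T_k}-\hat B_{T_k^\epsilon}\bigr|$ using the fluctuation of Brownian motion over the time-intervals $[T_k\wedge T_k^\epsilon,\,T_k\vee T_k^\epsilon]$. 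The size of $|T_k-T_k^\epsilon|$ is governed by the deviations of $\sum_{j\le k}(\tau_j-\EE\tau_1)$, whose variance is bounded via \eqref{moments_tau} by $k\,\EE\tau_1^2\le 4k\,\EE V_{1,n}^4$, and by Proposition~\ref{Rnprop} the increment satisfies $\EE V_{1,n}^4=\frac{t}{n}\int_{|x|\le\epsilon}x^4\nu(dx)+3\sigma(\epsilon)^4 t^2/n^2$, which is what makes $\beta(\epsilon)=\int_{|x|\le\epsilon}x^4\nu(dx)/\sigma_0(\epsilon)^4$ the natural small parameter.

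The technical heart is therefore to control the maximal Brownian displacement $\max_{1\le k\le n}\sup\{|\hat B_u-\hat B_{T_k^\epsilon}|:u\text{ between }T_k,T_k^\epsilon\}$ in terms of $\max_k|T_k-T_k^\epsilon|$, and this is where the logarithmic factor in $\beta_1^t(\epsilon)$ enters: the modulus of continuity of $\hat B$ over a window of length $h$ near a fixed horizon $t$ contributes a factor $\sqrt{h\log(t/h)}$, and one optimizes the grid size $n$ (equivalently the window scale) against $\sqrt{\beta(\epsilon)}$ to produce the exponent $1/6$ and the $\sqrt{\log(t/\beta(\epsilon)^{1/3}+3)}$ term appearing in $\beta_1^t(\epsilon)=\beta(\epsilon)^{1/6}\bigl(\sqrt{\log(t/\beta(\epsilon)^{1/3}+3)}+1\bigr)$. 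I would combine a maximal inequality for the partial sums $T_k-T_k^\epsilon$ (Doob, using that $T_k-T_k^\epsilon$ is a martingale in $k$) with a Gaussian tail bound for Brownian increments, then integrate the resulting tail estimate to recover the $L^1$ bound, and finally multiply through by $\sigma(\epsilon)$ (the global scale of $\hat B_{T_k}=R^\epsilon_{kt/n}$) and the Lipschitz constant of $f$.

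The main obstacle I anticipate is making the oscillation/time-change estimate uniform in $k$ over a number of grid points $n$ that itself must be chosen as a function of $\epsilon$: one needs the discretization error of both running suprema to be negligible compared with $\sigma_0(\epsilon)\beta_1^t(\epsilon)$, while simultaneously keeping $\max_k|T_k-T_k^\epsilon|$ small enough that the Brownian modulus of continuity yields exactly the $\beta(\epsilon)^{1/6}$ rate. Getting the explicit constant $C_t$ to be genuinely independent of $\epsilon$ requires that the $\log$-factors be arranged as in the definition of $\beta_1^t(\epsilon)$ rather than left as a free optimization parameter, so the careful bookkeeping of the optimization in $n$ (or in the window scale) against the two competing error sources is the delicate part of the argument.
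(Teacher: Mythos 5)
Your overall architecture — discrete skeleton, Skorokhod embedding of $R^\epsilon_{kt/n}$ as $\hat B_{T_k}$, Kolmogorov/Doob control of $\max_k|T_k-T_k^\epsilon|$ via \eqref{moments_tau} and the fourth moment from Proposition~\ref{Rnprop}, block-oscillation estimate for $\hat B$ with the sub-Gaussian $\sqrt{\log}$ factor, and an optimization producing $\beta(\epsilon)^{1/6}$ — is exactly the paper's route (Theorem~\ref{plongement} plus the short reduction in the proof of Theorem~\ref{petitssauts}). But your opening reduction is a genuine misstep: you claim it suffices to bound $\EE\bigl|M_t-\hat M_t^{\epsilon}\bigr|\leq \EE\sup_{0\le s\le t}\bigl|R_s^\epsilon-\sigma(\epsilon)\hat W_s\bigr|$ ``in the same coupling.'' In the original probability space $\hat W$ is \emph{independent} of $R^\epsilon$, so this sup is of order $\sigma(\epsilon)\sqrt{t}$ and can never beat the crude rate — it is precisely the bound of Proposition~\ref{osupX}, $4K\sqrt{t}\,\sigma(\epsilon)$. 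The theorem controls a difference of \emph{expectations}, and the whole point of the embedding is to replace the coupling: one uses that $\bigl(X^\epsilon_{kt/n},R^\epsilon_{kt/n}\bigr)_k$ and $\bigl(X^\epsilon_{kt/n},\hat B_{T_k}\bigr)_k$ have the same joint law (and likewise with $\hat B_{T_k^\epsilon}$ for the $\hat W$ term), so that
\begin{equation*}
I^{\epsilon}_f(n)\leq K\,\EE\sup_{1\leq k\leq n}\bigl|\hat B_{T_k}-\hat B_{T_k^{\epsilon}}\bigr|,
\end{equation*}
which is small because $T_k$ and $T_k^\epsilon$ are close, not because the two noises are pathwise close. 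Since you do pivot to this coupling in your second paragraph, the slip is repairable, but as written the first reduction would cap you at $O(\sigma(\epsilon))$.

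The second issue is your anticipated ``delicate balancing'' of the grid size $n$ against $\beta(\epsilon)$: no such balance exists in the correct argument, and pursuing it would add spurious difficulty. For fixed $\epsilon$, the skeleton maxima converge a.s.\ to the continuous suprema (c\`adl\`ag paths, grids containing $t$), and dominated convergence gives $I^\epsilon_f(n)\to I^\epsilon_f$; meanwhile the embedding bound is uniform in $n$ in the $\limsup$ sense — note $n\,\EE V_{1,n}^4 = t\int_{|x|\le\epsilon}x^4\nu(dx)+3\sigma(\epsilon)^4t^2/n$ does not degrade as $n\to\infty$, which is what Lemma~\ref{Pnlemme} exploits. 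So one simply lets $n\to\infty$ and the only optimization is over the Brownian window scale $\delta$ (a parameter independent of $n$), split into the good event $\{\sup_k|T_k-T_k^\epsilon|\le\delta\}$, where at most $[\sigma(\epsilon)^2 t/\delta]+3$ blocks on the horizon $\sigma(\epsilon)^2t$ (not $t$) give $I_1\le C_\alpha\sqrt{\delta\log(\sigma(\epsilon)^2t/\delta+3)}$, and the bad event, where Cauchy–Schwarz and Lemma~\ref{Pnlemme} give $I_2\lesssim t\,\sigma(\epsilon)\sigma_0(\epsilon)^2\sqrt{\beta(\epsilon)}/\delta$; the choice $\delta=\sigma_0(\epsilon)^2\beta(\epsilon)^{1/3}$ then yields $C_t\sigma_0(\epsilon)\beta_1^t(\epsilon)$. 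Your tail-integration alternative for the $L^1$ bound is workable but clumsier than this single split with one optimized $\delta$, since the Chebyshev-type tail in $\delta$ is not summable without a cutoff anyway.
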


\begin{proof}
Set
\begin{eqnarray*}
I^{\epsilon}_f&=&\left|\EE \left(f\left(\sup_{0 \leq s\leq t} X_{s}\right) -f\left(\sup_{0 \leq s \leq t} \left(X_{s}^{\epsilon}+\sigma(\epsilon)\hat{W}_{s}\right)\right)\right)\right|
\\I^{\epsilon}_f(n)&=&\left|\EE \left(f\left(\sup_{0 \leq k \leq n} X_{\frac{kt}{n}}\right) -f\left(\sup_{0 \leq k \leq n} \left(X_{\frac{kt}{n}}^{\epsilon}+\sigma(\epsilon)\hat{W}_{\frac{kt}{n}}\right)\right)\right)\right|.
\end{eqnarray*}
Because $f$ is, say, $K$-Lipschitz, we can show that
\begin{eqnarray*}
\left|f\left(\sup_{0 \leq k \leq n} X_{\frac{kt}{n}}\right) -f\left(\sup_{0 \leq k \leq n} \left(X_{\frac{kt}{n}}^{\epsilon}+\sigma(\epsilon)\hat{W}_{\frac{kt}{n}}\right)\right)\right|&\leq&K\left(\sup_{0 \leq s \leq t} \left|R^{\epsilon}_{s}\right|+\sigma(\epsilon)\sup_{0 \leq s \leq t}\left|\hat{W}_{s}\right|\right).
\end{eqnarray*}
As the right hand side expression is integrable, by dominated convergence we can deduce that $\lim_{n\rightarrow +\infty}I^{\epsilon}_f(n)=I^{\epsilon}_f$.
It holds that
\begin{eqnarray*}
I^{\epsilon}_f(n)&=&\left|\EE \left(f\left(\sup_{0 \leq k \leq n} \left(X_{\frac{kt}{n}}^{\epsilon}+\hat{B}_{T_k}\right)\right) -f\left(\sup_{0 \leq k \leq n} \left(X_{\frac{kt}{n}}^{\epsilon}+\hat{B}_{T_k^{\epsilon}}\right)\right)\right)\right|
\\&\leq&K\EE\left|\sup_{0 \leq k \leq n} \left(X_{\frac{kt}{n}}^{\epsilon}+\hat{B}_{T_k}\right)-\sup_{0 \leq k \leq n} \left(X_{\frac{kt}{n}}^{\epsilon}+\hat{B}_{T_k^{\epsilon}}\right)\right|
\\&\leq&K\EE\sup_{1 \leq k \leq n} \left|\hat{B}_{T_k}-\hat{B}_{T_k^{\epsilon}}\right|.
\end{eqnarray*}
Part 1 of the following theorem concludes the proof.
\end{proof} 

\begin{theorem}\label{plongement}
Let $X$ be an infinite activity L\'evy process. Then:
\begin{itemize}
\item It holds that
\begin{equation*}
\limsup_{n\rightarrow +\infty}\EE\sup_{1 \leq k \leq n} \left|\hat{B}_{T_k}-\hat{B}_{T_k^{\epsilon}}\right|\leq C_t\sigma_0(\epsilon)\beta_1^t(\epsilon).
\end{equation*}
\item It holds that
\begin{equation*}
\limsup_{n\rightarrow +\infty}\EE\sup_{1 \leq k \leq n} \left|\hat{B}_{T_k}-\hat{B}_{T_k^{\epsilon}}\right|^2\leq C_t\sigma_0(\epsilon)^2\beta_2^t(\epsilon).
\end{equation*}
\item For any reals $p\geq1$ and $\theta\in(0,1)$, it holds that
\begin{equation*}
\limsup_{n\rightarrow +\infty}\EE\sup_{1 \leq k \leq n} \left|\hat{B}_{T_k}-\hat{B}_{T_k^{\epsilon}}\right|^p\leq C_{p,\theta,t}\sigma_0(\epsilon)^p\beta_{p,\theta}^t(\epsilon).
\end{equation*}
\end{itemize}
In the above, $C_t$ and $C_{p,\theta,t}$ are constants independent of $\epsilon$.
\end{theorem}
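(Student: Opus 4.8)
The plan is to compare the random clock $T_k$ with its mean $T_k^{\epsilon}$ and to transfer the smallness of $T_k-T_k^{\epsilon}$ into smallness of $\hat B_{T_k}-\hat B_{T_k^{\epsilon}}$ via the continuity of $\hat B$. The first observation is that $T_k^{\epsilon}=\sigma(\epsilon)^2kt/n=k\,\EE\tau_1$, so $T_k-T_k^{\epsilon}=\sum_{j=1}^{k}(\tau_j-\EE\tau_1)$ is a centered random walk, hence a discrete martingale. By Doob's $L^2$ maximal inequality, $\EE\sup_{1\leq k\leq n}|T_k-T_k^{\epsilon}|^2\leq 4n\,\mathrm{Var}(\tau_1)$. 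Combining $\mathrm{Var}(\tau_1)\leq\EE\tau_1^2\leq 4\,\EE V_{1,n}^4$ from \eqref{moments_tau} with the exact fourth-moment formula of Proposition~\ref{Rnprop} applied to $R^{\epsilon}_{t/n}$ gives $n\,\EE V_{1,n}^4=t\int_{|x|\leq\epsilon}x^4\nu(dx)+3t^2\sigma(\epsilon)^4/n$, whence $\limsup_{n}n\,\mathrm{Var}(\tau_1)\leq 4t\,\sigma_0(\epsilon)^4\beta(\epsilon)$. Thus the clock deviation has typical size $\sigma_0(\epsilon)^2\sqrt{\beta(\epsilon)}$, and this is the single quantity through which all the $\beta(\epsilon)$-dependence will enter.

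Next I fix a threshold $\delta>0$ and split on the event $A_{\delta}=\{\sup_{1\leq k\leq n}|T_k-T_k^{\epsilon}|\leq\delta\}$. On $A_{\delta}$ each $T_k$ lies within $\delta$ of $T_k^{\epsilon}\in[0,\sigma(\epsilon)^2t]$, so $\sup_k|\hat B_{T_k}-\hat B_{T_k^{\epsilon}}|\leq\Omega(\delta)$, where $\Omega(\delta)$ is the modulus of continuity of $\hat B$ on $[0,\sigma(\epsilon)^2t+\delta]$ at scale $\delta$; crucially this is a pointwise domination that needs no independence between $\hat B$ and the stopping times $\tau_j$. A standard chaining and Gaussian maximal estimate then yields $\EE\,\Omega(\delta)\leq C\sqrt{\delta}\,(\sqrt{\log(\sigma(\epsilon)^2t/\delta)}+1)$, and more generally $\EE\,\Omega(\delta)^p\leq C_p\,\delta^{p/2}\big((\log(\sigma(\epsilon)^2t/\delta))^{p/2}+1\big)$, so the contribution of the good event is of order $\delta^{p/2}$ up to logarithmic factors.

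On the complementary event I would use Cauchy--Schwarz for $p=1,2$ and Hölder with a free exponent for general $p$. Writing $Z=\sup_{1\leq k\leq n}|\hat B_{T_k}-\hat B_{T_k^{\epsilon}}|$, for $\theta\in(0,1)$ one has $\EE[Z^p\indicatrice_{A_{\delta}^c}]\leq(\EE Z^{p/(1-\theta)})^{1-\theta}\,\P(A_{\delta}^c)^{\theta}$. Here $\P(A_{\delta}^c)\leq\delta^{-2}\EE\sup_k|T_k-T_k^{\epsilon}|^2\leq 4n\,\mathrm{Var}(\tau_1)/\delta^2$ by Chebyshev and the Doob bound above, while all moments of $Z$ are of order $\sigma_0(\epsilon)$: indeed $\sup_k|\hat B_{T_k}|\stackrel{d}{=}\sup_k|R^{\epsilon}_{kt/n}|$ is dominated, via Doob--Burkholder and Proposition~\ref{Rnprop} (recall $R^{\epsilon}$ is a càdlàg martingale), by $\sup_{s\leq t}|R^{\epsilon}_s|$ with $\EE(\sup_{s\leq t}|R^{\epsilon}_s|)^r\leq C_{r,t}\sigma_0(\epsilon)^r$, and $\sup_k|\hat B_{T_k^{\epsilon}}|\leq\sup_{s\leq\sigma(\epsilon)^2t}|\hat B_s|$ has $r$-th moment of order $(\sigma(\epsilon)^2t)^{r/2}\leq C_{r,t}\sigma_0(\epsilon)^r$. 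Hence $(\EE Z^{p/(1-\theta)})^{1-\theta}\leq C\sigma_0(\epsilon)^p$ and the tail term is of order $\sigma_0(\epsilon)^p\big(\sigma_0(\epsilon)^4\beta(\epsilon)/\delta^2\big)^{\theta}$.

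Assembling the two pieces, the bound reads (after passing to $\limsup_n$) $\EE Z^p\lesssim \delta^{p/2}(\log(\cdots))^{p/2}+\sigma_0(\epsilon)^{p+4\theta}\beta(\epsilon)^{\theta}\delta^{-2\theta}$. Balancing $\delta^{p/2}$ against the tail forces $\delta\asymp\sigma_0(\epsilon)^2\beta(\epsilon)^{2\theta/(p+4\theta)}$, which produces exactly the exponent $p\theta/(p+4\theta)$ on $\beta(\epsilon)$ and the prefactor $\sigma_0(\epsilon)^p$, giving $C_{p,\theta,t}\sigma_0(\epsilon)^p\beta_{p,\theta}^t(\epsilon)$; the cases $p=1$ and $p=2$ follow from the same scheme while keeping the sharper $(\log)^{1/2}$ and $(\log)^1$ factors coming from $\EE\,\Omega(\delta)$ and $\EE\,\Omega(\delta)^2$, yielding $\beta_1^t$ and $\beta_2^t$. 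The main obstacle I anticipate is the modulus-of-continuity estimate with the correct logarithmic powers: the crude Hölder step on $A_{\delta}^c$ costs an extra logarithmic factor, which is why the general statement carries $(\log)^p$ rather than $(\log)^{p/2}$, whereas the dedicated low-moment arguments recover the sharp power. A secondary point requiring care is that the Skorokhod embedding is constructed for each fixed $n$, so every estimate must be passed through $\limsup_{n\to\infty}$, using both the convergence of $n\,\mathrm{Var}(\tau_1)$ and the equality in law of $(\hat B_{T_k})_k$ and $(R^{\epsilon}_{kt/n})_k$.
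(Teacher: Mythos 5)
Your proposal is correct and is essentially the paper's own proof: the same splitting on the event $\{\sup_{1\le k\le n}|T_k-T_k^{\epsilon}|\le\delta\}$, the same control of the random clock via $\EE\tau_1^2\le 4\EE V_{1,n}^4$ and the exact fourth-moment formula of Proposition 1 (you use Doob's $L^2$ inequality plus Chebyshev where the paper invokes Kolmogorov's inequality in Lemma 9 --- the same bound up to a constant), the same grid-based modulus-of-continuity estimate with a Gaussian $\sqrt{\log}$ maximal bound on the good event, the same Cauchy--Schwarz/H\"older step with uniform-in-$n$ moment bounds on $Z$ on the bad event, and the same optimization over $\delta$. The only deviations are cosmetic: the paper obtains the maximal-log estimate through Jensen's inequality with the concave map $g(x)=\sqrt{\alpha^{-1}\log x}$ (respectively $g(x)=\left(\alpha^{-1}\log x\right)^{p}$ for part 3, per the remark following the theorem, which is also where the $(\log)^p$ really comes from, rather than from the H\"older step as you suggest), while your chaining formulation even retains the sharper $(\log)^{p/2}$ power for general $p$, which implies the stated $\beta_{p,\theta}^t$ bound after noting that $\log\left(t/\beta(\epsilon)^{2\theta/(p+4\theta)}+3\right)\le C_{p,\theta}\log\left(t/\beta(\epsilon)^{p\theta/(p+4\theta)}+3\right)$.
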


This theorem is the main result of this section.
\begin{lemma}\label{Pnlemme}
Let X be an infinite activity L\'evy process. Then, for any $\delta>0$,
\begin{eqnarray*}
\limsup_{n\rightarrow +\infty}\P\left[\sup_{1 \leq k \leq n} \left|T_k-T_k^{\epsilon}\right|>\delta\right]&\leq&\frac{4t\sigma_0(\epsilon)^4\beta(\epsilon)}{\delta^2}.
\end{eqnarray*}
\end{lemma}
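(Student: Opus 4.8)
The plan is to recognize that $T_k-T_k^{\epsilon}$ is a centered random walk and then apply a maximal inequality. First I would observe that, since the $\tau_j$ are i.i.d.\ with $\EE\tau_1=\mbox{Var}(V_{1,n})=\sigma(\epsilon)^2 t/n$, the deterministic sequence $T_k^{\epsilon}=\sigma(\epsilon)^2 kt/n$ is precisely $\sum_{j=1}^k\EE\tau_j$. Hence
\[
T_k-T_k^{\epsilon}=\sum_{j=1}^k\left(\tau_j-\EE\tau_j\right),
\]
which is a sum of independent centered random variables, i.e.\ an $L^2$ martingale in $k$ (the $\tau_j$ have finite second moment by \eqref{moments_tau} together with Proposition~\ref{Rnprop}).

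Next I would apply Kolmogorov's maximal inequality, equivalently Doob's inequality for the nonnegative submartingale $\left(T_k-T_k^{\epsilon}\right)^2$: for any $\delta>0$,
\[
\P\left[\sup_{1\le k\le n}\left|T_k-T_k^{\epsilon}\right|>\delta\right]\le\frac{\EE\left(T_n-T_n^{\epsilon}\right)^2}{\delta^2}=\frac{n\,\mbox{Var}(\tau_1)}{\delta^2},
\]
the last equality because the summands are i.i.d.\ and centered.

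It then remains to estimate $n\,\mbox{Var}(\tau_1)$. Using $\mbox{Var}(\tau_1)\le\EE\tau_1^2$ together with \eqref{moments_tau}, we have $\EE\tau_1^2\le 4\,\EE V_{1,n}^4=4\,\EE\left|R^{\epsilon}_{t/n}\right|^4$. Proposition~\ref{Rnprop}, applied with $s=t/n$, gives $\EE\left|R^{\epsilon}_{t/n}\right|^4=(t/n)\int_{|x|\le\epsilon}x^4\nu(dx)+3\,\sigma(\epsilon)^4 t^2/n^2$, whence
\[
n\,\mbox{Var}(\tau_1)\le 4t\int_{|x|\le\epsilon}x^4\nu(dx)+\frac{12\,\sigma(\epsilon)^4 t^2}{n}.
\]
Letting $n\to+\infty$ removes the second term, and recalling from the definition that $\int_{|x|\le\epsilon}x^4\nu(dx)=\sigma_0(\epsilon)^4\beta(\epsilon)$, I obtain $\limsup_{n}n\,\mbox{Var}(\tau_1)\le 4t\,\sigma_0(\epsilon)^4\beta(\epsilon)$, which combined with the maximal inequality yields the claim.

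There is no serious obstacle here: the argument is a one-line maximal inequality followed by the fourth-moment formula of Proposition~\ref{Rnprop}. The only point requiring a little care is that the bound on $\EE\left(T_n-T_n^{\epsilon}\right)^2$ is $n$-dependent, so one must pass to the $\limsup$ in $n$ to discard the $O(1/n)$ contribution coming from the variance (the term $3\,\sigma(\epsilon)^4 t^2/n^2$ scaled by $n$); this is exactly why the statement is phrased with a $\limsup$ rather than as a bound valid for each fixed $n$.
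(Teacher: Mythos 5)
Your proposal is correct and follows exactly the paper's own argument: Kolmogorov's maximal inequality applied to the centered random walk $T_k-T_k^{\epsilon}=\sum_{j\leq k}(\tau_j-\EE\tau_j)$, the bound $\EE\tau_1^2\leq 4\,\EE\left(R^{\epsilon}_{t/n}\right)^4$ from \eqref{moments_tau}, and the fourth-moment identity of Proposition~\ref{Rnprop}. You even spell out the final step the paper leaves implicit, namely that $4n\,\EE\left(R^{\epsilon}_{t/n}\right)^4=4t\,\sigma_0(\epsilon)^4\beta(\epsilon)+12\,\sigma(\epsilon)^4t^2/n$, so the $O(1/n)$ term disappears in the $\limsup$, which is precisely why the lemma is stated asymptotically.
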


\begin{proof}
As $T_k-T_k^{\epsilon}=\sum_{i=1}^k \left(\tau_i-\EE\left(\tau_i\right)\right)$, by Kolmogorov's inequality
\begin{eqnarray*}
\P\left[\sup_{1 \leq k \leq n} \left|T_k-T_k^{\epsilon}\right|>\delta\right]&\leq& \frac{\mbox{Var}\left(T_n-T_n^{\epsilon}\right)}{\delta^2}\\
&\leq&\frac{n\mbox{Var}\left(\tau_1\right)}{\delta^2}\\
&\leq&\frac{n\EE\tau_1^2}{\delta^2}\\
&\leq& \frac{4n\EE \left(R_{\frac{t}{n}}^{\epsilon}\right)^4}{\delta^2},
\end{eqnarray*}
where the last inequality follows from \eqref{moments_tau}. The proof then follows from Proposition~\ref{Rnprop}.
\end{proof}

\begin{proof}[Proof of Theorem~\ref{plongement}]
For $\delta>0$, we have
\begin{eqnarray*}
\EE\sup_{1 \leq k \leq n} \left|\hat{B}_{T_k}-\hat{B}_{T_k^{\epsilon}}\right|&=&I_1+I_2,
\end{eqnarray*}
with
\begin{eqnarray*}
&&I_1=\EE\sup_{1 \leq k \leq n} \left|\hat{B}_{T_k}-\hat{B}_{T_k^{\epsilon}}\right|\indicatrice_{\{\sup_{1\leq k\leq n}\left|T_k-T_k^{\epsilon}\right|\leq\delta\}}
\\&&I_2=\EE\sup_{1 \leq k \leq n} \left|\hat{B}_{T_k}-\hat{B}_{T_k^{\epsilon}}\right|\indicatrice_{\{\sup_{1\leq k\leq n}\left|T_k-T_k^{\epsilon}\right|>\delta\}}.
\end{eqnarray*}
On $\{\sup_{1\leq k\leq n}\left|T_k-T_k^{\epsilon}\right|\leq\delta\}$, set, for $k$ fixed,
\begin{eqnarray*}
&&s_1=T_k^{\epsilon}\wedge T_k\\
&&s_2=T_k^{\epsilon}\vee T_k.
\end{eqnarray*}
We have $s_1\leq s_2\leq s_1+\delta$. Let $j$ be such that $j\delta\leq s_1<(j+1)\delta$. We have $s_1\leq s_2\leq (j+2)\delta$. If $j\delta\leq s_1\leq s_2\leq (j+1)\delta$, we have
\begin{eqnarray*}
\left|\hat{B}_{s_1}-\hat{B}_{s_2}\right|&\leq& \left|\hat{B}_{s_1}-\hat{B}_{j\delta}\right|+\left|\hat{B}_{j\delta}-\hat{B}_{s_2}\right|\\
&\leq&2\sup_{0\leq j\leq \left[\frac{\sigma(\epsilon)^2t}{\delta}\right]+1}\left(\sup_{j\delta\leq u\leq(j+1)\delta}\left|\hat{B}_u-\hat{B}_{j\delta}\right|\right).
\end{eqnarray*}
If $j\delta\leq s_1\leq (j+1)\delta\leq s_2\leq (j+2)\delta$, we have
\begin{eqnarray*}
\left|\hat{B}_{s_1}-\hat{B}_{s_2}\right|&\leq& \left|\hat{B}_{s_1}-\hat{B}_{j\delta}\right|+\left|\hat{B}_{j\delta}-\hat{B}_{(j+1)\delta}\right|+\left|\hat{B}_{(j+1)\delta}-\hat{B}_{s_2}\right|\\
&\leq&3\sup_{0\leq j\leq \left[\frac{\sigma(\epsilon)^2t}{\delta}\right]+2}\left(\sup_{j\delta\leq u\leq(j+1)\delta}\left|\hat{B}_u-\hat{B}_{j\delta}\right|\right).
\end{eqnarray*}
Hence
\begin{eqnarray*}
I_1
&\leq&3\EE\sup_{0\leq j\leq \left[\frac{\sigma(\epsilon)^2t}{\delta}\right]+2}\left(\sup_{j\delta\leq u\leq(j+1)\delta}\left|\hat{B}_u-\hat{B}_{j\delta}\right|\right)\\
&=&3\EE\sup_{1\leq j\leq \left[\frac{\sigma(\epsilon)^2t}{\delta}\right]+3}\left(\sup_{(j-1)\delta\leq u\leq j\delta}\left|\hat{B}_u-\hat{B}_{(j-1)\delta}\right|\right).
\end{eqnarray*}
The r.v.'s $\left(\sup_{(j-1)\delta\leq u\leq j\delta}\left|\hat{B}_u-\hat{B}_{(j-1)\delta}\right|\right)_{1\leq j\leq \left[\frac{\sigma(\epsilon)^2t}{\delta}\right]+3}$ are i.i.d. with the same distribution as $\sup_{0\leq u\leq \delta}\left|\hat{B}_u\right|$ and, in turn, $\sqrt{\delta}\sup_{0\leq u\leq 1}\left|\hat{B}_u\right|$. Then
\begin{eqnarray*}
I_1
&\leq&3\sqrt{\delta}\EE\sup_{1\leq j\leq \left[\frac{\sigma(\epsilon)^2t}{\delta}\right]+3}V_j,
\end{eqnarray*}
where $\left(V_j\right)_{1\leq j\leq \left[\frac{\sigma(\epsilon)^2t}{\delta}\right]+3}$ are i.i.d. r.v.'s with the same distribution as\\ $\sup_{0\leq u\leq 1}\left|\hat{B}_u\right|$. On the other hand, we know that if $\left(V_j\right)_{1\leq j\leq m}$ are i.i.d. r.v.'s satisfying $\EE e^{\alpha V_1^2}<\infty$ where $\alpha$ is a positive real, then
\begin{eqnarray*}
\EE\sup_{1\leq j\leq m}V_j\leq g\left(m\EE e^{\alpha V_1^2}\right),
\end{eqnarray*}
where $g : x\in[1,+\infty)\rightarrow \sqrt{\frac{1}{\alpha}\log(x)}$. Indeed, since $g$ is concave, we have
\begin{eqnarray*}
\EE\sup_{1\leq j\leq m}V_j&=&\EE\sup_{1\leq j\leq m}g\left(e^{\alpha V_j^2}\right)\\
&=&\EE g\left(\sup_{1\leq j\leq m}e^{\alpha V_j^2}\right), \ \mbox{because g is non-decreasing}\\
&\leq&g\left(\EE\sup_{1\leq j\leq m}e^{\alpha V_j^2}\right), \ \mbox{by Jensen's inequality}\\
&\leq&g\left(\EE\sum_{j=1}^me^{\alpha V_j^2}\right), \ \mbox{because g is non-decreasing}\\
&=& g\left(m\EE e^{\alpha V_1^2}\right).
\end{eqnarray*}
In our case $V_1=\sup_{0\leq u\leq 1}\left|\hat{B}_u\right|$. So
\begin{eqnarray*}
&&V_1\leq \sup_{0\leq u\leq 1}\hat{B}_u+\sup_{0\leq u\leq 1}\left(-\hat{B}_u\right).
\end{eqnarray*}
For $\alpha\in(0,1/8)$, we have
\begin{eqnarray*}
\EE e^{\alpha V_1^2}&\leq& \EE e^{2\alpha\left(\left(\sup_{0\leq u\leq 1}\hat{B}_u\right)^2+\left(\sup_{0\leq u\leq1}\left(-\hat{B}_u\right)\right)^2\right)} \\
&\leq& \left(\EE e^{4\alpha\left(\sup_{0\leq u\leq 1}\hat{B}_u\right)^2}\right)^{\frac{1}{2}} \left(\EE e^{4\alpha\left(\sup_{0\leq u\leq1}\left(-\hat{B}_u\right)\right)^2}\right)^{\frac{1}{2}}\\
&=& \EE e^{4\alpha\left(\sup_{0\leq u\leq 1}\hat{B}_u\right)^2}
\\&=&(1-8\alpha)^{-\frac{1}{2}}.
\end{eqnarray*}
The last equality follows from $\left(\sup_{0\leq u\leq 1}\hat{B}_u\right)^2\sim\chi_1^2$ upon using the moment-generating
function of the $\chi_1^2$ distribution, given by $(1-2\beta)^{-\frac{1}{2}}$ for $\beta<\frac{1}{2}$.

It follows straightforwardly from the above that, for $\alpha\in(0,\frac{1}{8})$,
\begin{eqnarray*}
I_1&\leq&C_{\alpha}\sqrt{\delta}\sqrt{\log\left(\frac{\sigma(\epsilon)^2t}{\delta}+3\right)},
\end{eqnarray*}
where $C_{\alpha}=3\sqrt{\frac{1}{\alpha}\left(1-\frac{\log(1-8\alpha)}{2\log(3)}\right)}$. Consider now $I_2$. We have \small
\begin{eqnarray*}
I_2&\leq&\left(\EE\left(\sup_{1 \leq k \leq n} \left|\hat{B}_{T_k}-\hat{B}_{T_k^{\epsilon}}\right|\right)^2\right)^{\frac{1}{2}}\left(\P\left[\sup_{1 \leq k \leq n} \left|T_k-T_k^{\epsilon}\right|>\delta\right]\right)^\frac{1}{2}\\ 
&\leq&\left(\EE\left(\sup_{1 \leq k \leq n} \left|\hat{B}_{T_k}\right|+\sup_{1 \leq k \leq n}\left|\hat{B}_{T_k^{\epsilon}}\right|\right)^2\right)^{\frac{1}{2}}\left(\P\left[\sup_{1 \leq k \leq n} \left|T_k-T_k^{\epsilon}\right|>\delta\right]\right)^\frac{1}{2}
\\&\leq&\left(\left(\EE\sup_{0\leq s \leq t} \left|R^{\epsilon}_{s}\right|^2\right)^{\frac{1}{2}}+\left(\EE\sup_{0 \leq s \leq \sigma(\epsilon)^2t} \left|\hat{B}_{s}\right|^2\right)^{\frac{1}{2}}\right)\left(\P\left[\sup_{1 \leq k \leq n} \left|T_k-T_k^{\epsilon}\right|>\delta\right]\right)^\frac{1}{2}
\\&\leq&2\left(\left(\EE\left|R^{\epsilon}_{t}\right|^2\right)^{\frac{1}{2}}+\left(\EE\left|\hat{B}_{\sigma(\epsilon)^2t}\right|^2\right)^{\frac{1}{2}}\right)\left(\P\left[\sup_{1 \leq k \leq n} \left|T_k-T_k^{\epsilon}\right|>\delta\right]\right)^\frac{1}{2}
\\&\leq&4\sqrt{t}\sigma(\epsilon)\left(\P\left[\sup_{1 \leq k \leq n} \left|T_k-T_k^{\epsilon}\right|>\delta\right]\right)^\frac{1}{2},
\end{eqnarray*}
where the fourth inequality is obtained using Doob's inequality.
So, by Lemma~\ref{Pnlemme}, we have
\begin{eqnarray*}
\limsup_{n\rightarrow +\infty}I_2&\leq& 4\sqrt{t}\sigma(\epsilon)\left(\frac{4t\sigma_0(\epsilon)^4\beta(\epsilon)}{\delta^2}\right)^{\frac{1}{2}}.
\end{eqnarray*}
Hence \small
\begin{eqnarray*}
\limsup_{n\rightarrow +\infty}\EE\sup_{1 \leq k \leq n} \left|\hat{B}_{T_k}-\hat{B}_{T_k^{\epsilon}}\right|&\leq&C_{\alpha}\sqrt{\delta\log\left(\frac{\sigma(\epsilon)^2t}{\delta}+3\right)}+\frac{8t}{\delta}\sigma(\epsilon)\sigma_0(\epsilon)^2\sqrt{\beta(\epsilon)}.
\end{eqnarray*}\normalsize
Part 1 now follows by letting $C_{t}=\max\left(C_{\alpha},8t\right)$ and choosing $\delta=\sigma_0(\epsilon)^2\beta(\epsilon)^{\frac{1}{3}}$. 

For the proof of parts 2 and 3 of the theorem, we refer the reader to [\cite{dia}, pp. 86-89]. However, some small corrections are needed in the proof of part 3 in order to comply with the definition of $\beta_{p,\theta}^t(\epsilon)$.
\end{proof}

\begin{remark}
Letting $\theta= 1/2$ and $p = 1,$ $2$ in the definition of $\beta_{p,\theta}^t(\epsilon)$, we see that
part 3 of Theorem 3 partially generalizes parts 1 and 2. It may be relevant to note here that for part 3 the proof used the function $g(x)=\left(\alpha^{-1}\log(x)\right)^p$, whereas for parts 1 and 2 it used the function $g(x)=\left(\alpha^{-1}\log(x)\right)^{p/2}$, $p = 1,$ $2$, respectively.
\end{remark}

The following result follows directly from part 1 of Theorem 3. 

\begin{proposition}\label{ctex}
Let $X$ be an integrable infinite activity L\'evy process, and $f$ a Lipschitz function. Then
\begin{eqnarray*}
 \left|\EE f\left(X_t\right)-\EE f\left(\hat{X}_t^{\epsilon}\right)\right|\leq C_t\beta_1^t(\epsilon)\sigma_0(\epsilon),
\end{eqnarray*}
where $C_t$ is a positive constant.
\end{proposition}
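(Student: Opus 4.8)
The plan is to adapt the proof of Theorem~\ref{petitssauts} to the endpoint, where the running supremum no longer plays a role and the argument becomes noticeably shorter. Taking $f$ to be, say, $K$-Lipschitz, I would start from the decompositions $X_t=X_t^{\epsilon}+R_t^{\epsilon}$ and $\hat{X}_t^{\epsilon}=X_t^{\epsilon}+\sigma(\epsilon)\hat{W}_t$, so that
\begin{eqnarray*}
\left|\EE f\left(X_t\right)-\EE f\left(\hat{X}_t^{\epsilon}\right)\right|=\left|\EE f\left(X_t^{\epsilon}+R_t^{\epsilon}\right)-\EE f\left(X_t^{\epsilon}+\sigma(\epsilon)\hat{W}_t\right)\right|.
\end{eqnarray*}
The whole point is to realise both terms on a common probability space on which the comparison reduces to a Brownian increment already controlled by Theorem~\ref{plongement}.

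Next I would fix $n$ and invoke the Skorokhod construction attached to the walk $R^{\epsilon}_{kt/n}$, $k=1,\dots,n$, building everything so that $X_t^{\epsilon}$ is independent of the pair $(\hat{B},(\tau_j))$; this is legitimate because $X^{\epsilon}$ and $R^{\epsilon}$ are independent (as already used in the proof of Proposition~\ref{oX-}) and $\hat{W}$ is independent of $X$. By the joint distributional identity of the embedding, $R_t^{\epsilon}=R^{\epsilon}_{nt/n}$ has the same law as $\hat{B}_{T_n}$, jointly with the independent $X_t^{\epsilon}$; and since $T_n^{\epsilon}=\sigma(\epsilon)^2 t=\mbox{Var}\left(\sigma(\epsilon)\hat{W}_t\right)$, the variable $\sigma(\epsilon)\hat{W}_t$ has the same law as $\hat{B}_{T_n^{\epsilon}}$, again jointly with $X_t^{\epsilon}$. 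Hence the displayed quantity equals $\left|\EE f\left(X_t^{\epsilon}+\hat{B}_{T_n}\right)-\EE f\left(X_t^{\epsilon}+\hat{B}_{T_n^{\epsilon}}\right)\right|$.

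With both expectations now written with the same $X_t^{\epsilon}$ and the same Brownian motion $\hat{B}$, the $K$-Lipschitz property gives, pointwise, $\left|f\left(X_t^{\epsilon}+\hat{B}_{T_n}\right)-f\left(X_t^{\epsilon}+\hat{B}_{T_n^{\epsilon}}\right)\right|\leq K\left|\hat{B}_{T_n}-\hat{B}_{T_n^{\epsilon}}\right|$, the additive shift $X_t^{\epsilon}$ dropping out. Taking expectations and bounding the single index $k=n$ by the supremum yields
\begin{eqnarray*}
\left|\EE f\left(X_t\right)-\EE f\left(\hat{X}_t^{\epsilon}\right)\right|\leq K\,\EE\sup_{1\leq k\leq n}\left|\hat{B}_{T_k}-\hat{B}_{T_k^{\epsilon}}\right|.
\end{eqnarray*}
Since the left-hand side does not depend on $n$, I would pass to $\limsup_{n\rightarrow+\infty}$ and apply part 1 of Theorem~\ref{plongement} to obtain $\left|\EE f\left(X_t\right)-\EE f\left(\hat{X}_t^{\epsilon}\right)\right|\leq KC_t\sigma_0(\epsilon)\beta_1^t(\epsilon)$, after which renaming $KC_t$ as $C_t$ finishes the proof.

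The argument is essentially routine given Theorem~\ref{plongement}, so there is no genuine analytic obstacle; the only thing requiring care is the probabilistic bookkeeping in the middle step, namely setting up a single probability space in which $X_t^{\epsilon}$ is kept as an independent additive shift while $R_t^{\epsilon}$ and $\sigma(\epsilon)\hat{W}_t$ are simultaneously replaced by $\hat{B}_{T_n}$ and $\hat{B}_{T_n^{\epsilon}}$ through the embedding. In particular one must check that the embedding identity is used only in distribution, so that no pathwise coupling of $X_t$ and $\hat{X}_t^{\epsilon}$ is claimed, and that $T_n^{\epsilon}=\sigma(\epsilon)^2t$ exactly matches the variance of $\sigma(\epsilon)\hat{W}_t$, which is what lets the deterministic time $T_n^{\epsilon}$ represent the Brownian approximation term.
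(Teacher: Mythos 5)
Your proposal is correct and follows essentially the same route as the paper: the paper likewise uses the embedding identities $R_t^{\epsilon}=^d\hat{B}_{T_n}$ and $\sigma(\epsilon)\hat{W}_t=^d\hat{B}_{T_n^{\epsilon}}$ (with $X_t^{\epsilon}$ as an independent shift), applies the $K$-Lipschitz bound to obtain $K\,\EE\left|\hat{B}_{T_n}-\hat{B}_{T_n^{\epsilon}}\right|$, and concludes with part 1 of Theorem~\ref{plongement}. Your additional bookkeeping (independence of $X_t^{\epsilon}$ from the pair $(\hat{B},(\tau_j))$, bounding the single index $k=n$ by the supremum, and taking $\limsup_{n\rightarrow+\infty}$ of an $n$-independent left-hand side) merely makes explicit what the paper leaves implicit.
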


\begin{proof}  
We have $R_t^{\epsilon}=^d \hat{B}_{T_n}$, $\sigma(\epsilon)\hat{W}_t=^d \hat{B}_{T_n^{\epsilon}}$. So, if $f$ is $K$-Lipschitz, we have
\begin{eqnarray*}
\left|\EE f\left(X_t\right)-\EE f\left(\hat{X}_t^{\epsilon}\right)\right|&=&\left|\EE f\left(X_t^{\epsilon}+\hat{B}_{T_n}\right)- \EE f\left(X_t^{\epsilon}+\hat{B}_{T_n^{\epsilon}}\right)\right|
\\&\leq& K\EE\left|\hat{B}_{T_n}-\hat{B}_{T_n^{\epsilon}}\right|.
\end{eqnarray*}
We conclude with Theorem~\ref{plongement}.
\end{proof}

For non-Lipschitz functions, we have the following result.
\begin{proposition}\label{oeM}
Let $X$ be an infinite activity L\'evy process and $p>1$. If $\EE e^{pM_t}<\infty$, then for any $x\in\R$ and for any $\theta\in(0,1)$
\begin{eqnarray*}
\left|\EE \left(e^{M_t}-x\right)^{+}-\EE\left(e^{\hat{M}_t^{\epsilon}}-x\right)^{+}\right|\leq C_{p,\theta,t}\sigma_0(\epsilon)\left(\beta_{\frac{p}{p-1},\theta}^t(\epsilon)\right)^{1-\frac{1}{p}},
\end{eqnarray*}
where $C_{p,\theta,t}$ is a positive constant independent of $\epsilon$.
\end{proposition}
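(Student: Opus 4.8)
The plan is to reduce the claim to an estimate of $\EE\bigl|e^{M_t}-e^{\hat M_t^\epsilon}\bigr|$ under a well-chosen coupling, combining the exponential-to-Lipschitz device already used in Proposition~\ref{OeM} with the Skorokhod-embedding bound of Theorem~\ref{plongement}. Since $u\mapsto(e^u-x)^+$ is $1$-Lipschitz in $e^u$, for any coupling of $(M_t,\hat M_t^\epsilon)$ with the correct marginals one has
\[
\left|\EE(e^{M_t}-x)^+-\EE(e^{\hat M_t^\epsilon}-x)^+\right|\leq\EE\left|e^{M_t}-e^{\hat M_t^\epsilon}\right|,
\]
and by the mean value theorem $\bigl|e^{M_t}-e^{\hat M_t^\epsilon}\bigr|\leq\bigl|M_t-\hat M_t^\epsilon\bigr|\,(e^{M_t}+e^{\hat M_t^\epsilon})$. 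Applying H\"older's inequality with conjugate exponents $q=p/(p-1)$ and $p$ then separates a ``Lipschitz'' factor $\bigl(\EE|M_t-\hat M_t^\epsilon|^q\bigr)^{1/q}$ from an exponential factor $\bigl(\EE(e^{M_t}+e^{\hat M_t^\epsilon})^p\bigr)^{1/p}$. The exponent $q=p/(p-1)$ is chosen precisely so that Theorem~\ref{plongement} supplies $\sigma_0(\epsilon)\bigl(\beta_{q,\theta}^t(\epsilon)\bigr)^{1/q}$, and $1/q=1-1/p$ reproduces the stated power of $\beta_{p/(p-1),\theta}^t(\epsilon)$.

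Next I would control the exponential factor uniformly. Using $\bigl(\EE(e^{M_t}+e^{\hat M_t^\epsilon})^p\bigr)^{1/p}\leq 2^{1-1/p}\bigl(\EE e^{pM_t}+\EE e^{p\hat M_t^\epsilon}\bigr)^{1/p}$, everything reduces to $\sup_{0<\epsilon\leq1}\EE e^{p\hat M_t^\epsilon}<\infty$. Writing $\hat M_t^\epsilon\leq M_t^\epsilon+\sigma(\epsilon)\sup_{0\leq s\leq t}\hat W_s$ and using the independence of $X^\epsilon$ and $\hat W$, this factorizes into $\EE e^{pM_t^\epsilon}\cdot\EE e^{p\sigma(\epsilon)\sup_{0\leq s\leq t}\hat W_s}$; the first factor is uniformly bounded by Lemma~\ref{OeMlemme}, and the second is finite and uniformly bounded since $\sup_{s\leq t}\hat W_s$ has a half-normal law, $\sigma(\epsilon)\leq\sigma(1)$, and the Gaussian moment generating function is finite.

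The Skorokhod embedding forces a discretization. I would set $M_t^{(n)}=\sup_{0\leq k\leq n}X_{kt/n}$, $\hat M_t^{\epsilon,(n)}=\sup_{0\leq k\leq n}\hat X_{kt/n}^\epsilon$, and let $J^\epsilon(n)$ denote the corresponding discretized difference. Since $M_t^{(n)}\to M_t$ and $\hat M_t^{\epsilon,(n)}\to\hat M_t^\epsilon$ almost surely (c\`adl\`ag paths), and since $(e^{M_t^{(n)}}-x)^+\leq e^{M_t}$ and $(e^{\hat M_t^{\epsilon,(n)}}-x)^+\leq e^{\hat M_t^\epsilon}$ are integrable, dominated convergence gives $J^\epsilon(n)\to J^\epsilon$. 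For fixed $n$, I would re-express the two marginal laws through the embedding setup preceding Theorem~\ref{plongement}: with $A_n=\sup_k(X_{kt/n}^\epsilon+\hat B_{T_k})$ and $B_n=\sup_k(X_{kt/n}^\epsilon+\hat B_{T_k^\epsilon})$ one has $A_n\stackrel{d}{=}M_t^{(n)}$ and $B_n\stackrel{d}{=}\hat M_t^{\epsilon,(n)}$ (because $\hat B$ and the $T_k$ are built from $R^\epsilon$, hence independent of $X^\epsilon$, while $(\hat B_{T_k^\epsilon})_k$ has the law of $(\sigma(\epsilon)\hat W_{kt/n})_k$). Applying the first two steps to $(A_n,B_n)$ and using $|A_n-B_n|\leq\sup_{1\leq k\leq n}|\hat B_{T_k}-\hat B_{T_k^\epsilon}|$ yields
\[
J^\epsilon(n)\leq\left(\EE\sup_{1\leq k\leq n}\bigl|\hat B_{T_k}-\hat B_{T_k^\epsilon}\bigr|^q\right)^{1/q}\left(\EE(e^{A_n}+e^{B_n})^p\right)^{1/p},
\]
whose second factor is uniformly bounded by the previous step. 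Taking $\limsup_{n\to\infty}$ and invoking part~3 of Theorem~\ref{plongement} with exponent $q=p/(p-1)$ then gives the desired bound $C_{p,\theta,t}\,\sigma_0(\epsilon)\bigl(\beta_{p/(p-1),\theta}^t(\epsilon)\bigr)^{1-1/p}$.

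The main obstacle is the bookkeeping of the two couplings together with the uniform moment control. The limit $J^\epsilon(n)\to J^\epsilon$ lives in the original probability space, whereas each $J^\epsilon(n)$ must be bounded in the embedding space, which itself depends on $n$; keeping the exponential factor bounded uniformly in both $n$ and $\epsilon$ (this is exactly where Lemma~\ref{OeMlemme} and the half-normal estimate enter) while matching the H\"older exponent $q=p/(p-1)$ to the power $\beta_{p/(p-1),\theta}^t(\epsilon)^{1-1/p}$ is the delicate point. The remaining ingredients — the mean value theorem, H\"older's and Doob's inequalities — are routine.
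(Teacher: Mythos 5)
Your proposal is correct and follows essentially the same route as the paper's own proof: discretize the suprema, transfer to the Skorokhod-embedding coupling $U_t^n=\sup_k\left(X_{kt/n}^{\epsilon}+\hat{B}_{T_k}\right)$, $\hat{U}_t^{\epsilon,n}=\sup_k\left(X_{kt/n}^{\epsilon}+\hat{B}_{T_k^{\epsilon}}\right)$, apply the mean value theorem and H\"older with exponents $p/(p-1)$ and $p$, control the exponential factor uniformly via Lemma~\ref{OeMlemme} together with the Gaussian bound on $\sigma(\epsilon)\sup_{0\leq s\leq t}\hat{W}_s$, and conclude by dominated convergence and part~3 of Theorem~\ref{plongement}. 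The only cosmetic difference is that the paper bounds $e^{p\bar{U}_t^{\epsilon,n}}\leq e^{pU_t^n}+e^{p\hat{U}_t^{\epsilon,n}}$ with $\bar{U}_t^{\epsilon,n}$ the mean-value point rather than using the sum $e^{M_t}+e^{\hat{M}_t^{\epsilon}}$ directly, which changes nothing of substance.
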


\begin{proof}
Define
\begin{eqnarray*}
 M_t^n=\sup_{0\leq k\leq n}\left(X_{\frac{kt}{n}}^{\epsilon}+R_{\frac{kt}{n}}^{\epsilon}\right), \ \hat{M}_t^{\epsilon,n}=\sup_{0\leq k\leq n}\left(X_{\frac{kt}{n}}^{\epsilon}+\sigma(\epsilon)\hat{W}_{\frac{kt}{n}}\right).
\end{eqnarray*}
We know that $\lim_{n\rightarrow +\infty}M_t^n=M_t$ a.s. and $\lim_{n\rightarrow +\infty}\hat{M}_t^{\epsilon,n}=\hat{M}_t^{\epsilon}$ a.s.
Set
\begin{eqnarray*}
 U_t^n=\sup_{0\leq k\leq n}\left(X_{\frac{kt}{n}}^{\epsilon}+\hat{B}_{T_k}\right), \ \hat{U}_t^{\epsilon,n}=\sup_{0\leq k\leq n}\left(X_{\frac{kt}{n}}^{\epsilon}+\hat{B}_{T_k^{\epsilon}}\right).
\end{eqnarray*}
So $M_t^n=^dU_t^n$ and $\hat{M}_t^{\epsilon,n}=^d\hat{U}_t^{\epsilon,n}$.
By the mean value theorem, we have
\begin{eqnarray*}
e^{U_t^n}-e^{\hat{U}_t^{\epsilon,n}}=\left(U_t^n-\hat{U}_t^{\epsilon,n}\right)e^{\bar{U}_t^{\epsilon,n}},
\end{eqnarray*}
where $\bar{U}_t^{\epsilon,n}$ is between $U_t^n$ and $\hat{U}_t^{\epsilon,n}$. Set
\begin{eqnarray*}
I^{\epsilon}_n=\left|\EE \left(e^{U_t^n}-x\right)^{+}-\EE\left( e^{\hat{U}_t^{\epsilon,n}}-x\right)^{+}\right|.
\end{eqnarray*}
Thus
\begin{eqnarray*}
I^{\epsilon}_n&\leq&\EE \left|e^{U_t^n}- e^{\hat{U}_t^{\epsilon,n}}\right|
\\&\leq&\EE\left|U_t^n-\hat{U}_t^{\epsilon,n}\right|e^{\bar{U}_t^{\epsilon,n}}
\\&\leq&\EE\sup_{0\leq k\leq n}\left|\hat{B}_{T_k}-\hat{B}_{T_k^{\epsilon}}\right|e^{\bar{U}_t^{\epsilon,n}}
\\&\leq&\left(\EE\sup_{0\leq k\leq n}\left|\hat{B}_{T_k}-\hat{B}_{T_k^{\epsilon}}\right|^{\frac{p}{p-1}}\right)^{1-\frac{1}{p}}\left(\EE e^{p\bar{U}_t^{\epsilon,n}}\right)^{\frac{1}{p}}
\\&\leq&\left(\EE\sup_{0\leq k\leq n}\left|\hat{B}_{T_k}-\hat{B}_{T_k^{\epsilon}}\right|^{\frac{p}{p-1}}\right)^{1-\frac{1}{p}}\left(\EE \left(e^{pM_t^{n}}+e^{p\hat{M}_t^{\epsilon,n}}\right)\right)^{\frac{1}{p}}
\\&\leq&\left(\EE\sup_{0\leq k\leq n}\left|\hat{B}_{T_k}-\hat{B}_{T_k^{\epsilon}}\right|^{\frac{p}{p-1}}\right)^{1-\frac{1}{p}}\left(\EE \left(e^{pM_t}+e^{p\hat{M}_t^{\epsilon}}\right)\right)^{\frac{1}{p}}.
\end{eqnarray*}
But
\begin{eqnarray*}
\EE \left(e^{pM_t}+e^{p\hat{M}_t^{\epsilon}}\right)&\leq&\EE \left(e^{pM_t}+e^{p\sigma(\epsilon)\sup_{0\leq s\leq t}\hat{W}_s}e^{pM_t^{\epsilon}}\right)
\\&\leq&\EE e^{pM_t}+2e^{\frac{p^2}{2}\sigma(\epsilon)^2t}\EE e^{pM_t^{\epsilon}}
\\&\leq&2e^{\frac{p^2}{2}\sigma(\epsilon)^2t}\EE\left(e^{pM_t}+e^{pM_t^{\epsilon}}\right).
\end{eqnarray*}
So using dominated convergence, Theorem~\ref{plongement} and Lemma~\ref{OeMlemme}, we get\small
\begin{eqnarray*}
\left|\EE \left(e^{M_t}-x\right)^{+}-\EE\left( e^{\hat{M}_t^{\epsilon}}-x\right)^{+}\right|&=&\lim_{n\rightarrow+\infty}\left|\EE \left(e^{M_t^n}-x\right)^{+}-\EE\left( e^{\hat{M}_t^{\epsilon,n}}-x\right)^{+}\right|
\\&=&\limsup_{n\rightarrow+\infty}\left|\EE \left(e^{U_t^n}-x\right)^{+}-\EE\left( e^{\hat{U}_t^{\epsilon,n}}-x\right)^{+}\right|
\\&\leq&C_{p,\theta,t}\sigma_0(\epsilon)\left(\beta_{\frac{p}{p-1},\theta}^t(\epsilon)\right)^{1-\frac{1}{p}}.
\end{eqnarray*}\normalsize
\end{proof}

\subsection{Estimates for cumulative distribution functions}
The bounds obtained in this section are better than those obtained by truncation, provided that condition \eqref{asmussen_rosinski_rmq_eq} is satisfied.
\begin{proposition}\label{oprobXM}
Let $X$ be an infinite activity L\'evy process. Below, the constants $C_t$ and $C_{x,t,q,\theta}$ are independent of $\epsilon$.
\begin{enumerate}
 \item If $b>0$, then
\begin{eqnarray*}
\sup_{x\in\R}\left|\P\left[X_t\geq x\right]-\P\left[\hat{X}_t^{\epsilon}\geq x\right]\right|\leq C_t\sigma_0(\epsilon)\beta_1^t(\epsilon).
\end{eqnarray*}
 \item If $X_t$ has a locally bounded probability density function and $x\in\R$, then for any pair of reals $\theta\in(0,1)$, $q\in(0,1/2]$,
\begin{eqnarray*}
\left|\P\left[X_t\geq x\right]-\P\left[\hat{X}_t^{\epsilon}\geq x\right]\right|\leq C_{x,t,q,\theta}\sigma_0(\epsilon)^{1-q}\left(\beta_{\frac{1}{q}-1,\theta}^t(\epsilon)\right)^{q}.
\end{eqnarray*}
\item If $M_t$ has a locally bounded probability density function on $(0,+\infty)$ and $x>0$, then for any pair of reals $\theta\in(0,1)$, $q\in(0,1/2]$,
\begin{eqnarray*}
\left|\P\left[M_t\geq x\right]-\P\left[\hat{M}_t^{\epsilon}\geq x\right]\right|\leq C_{x,t,q,\theta}\sigma_0(\epsilon)^{1-q}\left(\beta_{\frac{1}{q}-1,\theta}^t(\epsilon)\right)^{q}.
\end{eqnarray*}
\end{enumerate}
\end{proposition}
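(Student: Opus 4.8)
The plan is to reduce each of the three parts to the Skorokhod-embedding estimates of Theorem~\ref{plongement}, exploiting that the left-hand sides depend only on the marginal laws of $X_t,\hat X_t^{\epsilon}$ (resp.\ $M_t,\hat M_t^{\epsilon}$): this frees us to replace the independent pair $(R^{\epsilon},\sigma(\epsilon)\hat W)$ by the coupled pair $(\hat B_{T_k},\hat B_{T_k^{\epsilon}})$ produced by the embedding, for which $|\hat B_{T_k}-\hat B_{T_k^{\epsilon}}|$ is small. For part~1 I would first isolate the Gaussian component: writing $X=bB+\tilde X$ with $\tilde X$ the L\'evy process of triplet $(\gamma,0,\nu)$ independent of $B$, one has $\P[X_t\ge x]=\EE F(\tilde X_t)$ and $\P[\hat X_t^{\epsilon}\ge x]=\EE F(\hat{\tilde X}_t^{\epsilon})$, where $F(v)=\P[bB_t\ge x-v]$ and $\hat{\tilde X}_t^{\epsilon}=\tilde X_t^{\epsilon}+\sigma(\epsilon)\hat W_t$. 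Since $bB_t$ has a density bounded by $1/(b\sqrt{2\pi t})$, the function $F$ is Lipschitz with that constant, uniformly in $x$. The quantity $|\EE F(\tilde X_t)-\EE F(\hat{\tilde X}_t^{\epsilon})|$ is exactly the one controlled by the argument proving Proposition~\ref{ctex} (which uses only part~1 of Theorem~\ref{plongement}, hence no integrability): it is at most $(b\sqrt{2\pi t})^{-1}C_t\sigma_0(\epsilon)\beta_1^t(\epsilon)$, and since the Lipschitz constant does not depend on $x$ the bound is uniform in $x$. Absorbing the factor into the constant gives part~1.

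For part~2 I would apply Lemma~\ref{probX-Y} on the embedding space. At the fixed time $t$ the endpoint of the embedding gives $X_t=^dX_t^{\epsilon}+\hat B_{T_n}$ and $\hat X_t^{\epsilon}=^dX_t^{\epsilon}+\hat B_{T_n^{\epsilon}}$ with $X_t^{\epsilon}$ independent of $\hat B$. Taking $X:=X_t^{\epsilon}+\hat B_{T_n}$, which has the law of $X_t$ and hence the assumed locally bounded density near $x$, and $Y:=X_t^{\epsilon}+\hat B_{T_n^{\epsilon}}$, Lemma~\ref{probX-Y} gives for any $p\ge1$ and $\delta>0$ the bound $|\P[X_t\ge x]-\P[\hat X_t^{\epsilon}\ge x]|\le K_x\delta+\delta^{-p}\EE|\hat B_{T_n}-\hat B_{T_n^{\epsilon}}|^{p}$. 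Bounding $\EE|\hat B_{T_n}-\hat B_{T_n^{\epsilon}}|^{p}\le\EE\sup_{1\le k\le n}|\hat B_{T_k}-\hat B_{T_k^{\epsilon}}|^{p}$, taking $\limsup_n$ and invoking part~3 of Theorem~\ref{plongement}, the second term becomes $C_{p,\theta,t}\delta^{-p}\sigma_0(\epsilon)^{p}\beta_{p,\theta}^t(\epsilon)$. Minimising $K_x\delta+C_{p,\theta,t}\delta^{-p}\sigma_0(\epsilon)^{p}\beta_{p,\theta}^t(\epsilon)$ over $\delta$ (the minimiser is of order $(\sigma_0(\epsilon)^{p}\beta_{p,\theta}^t(\epsilon))^{1/(p+1)}$) produces a bound of order $\sigma_0(\epsilon)^{p/(p+1)}(\beta_{p,\theta}^t(\epsilon))^{1/(p+1)}$; the choice $p=1/q-1$, for which $p\ge1\iff q\le1/2$, turns the exponents into $1-q$ and $q$ and yields the asserted estimate (so $\rho=q$).

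Part~3 is the delicate one and I expect it to be the main obstacle, because the density hypothesis is imposed on the continuous-time supremum $M_t$ while the embedding only couples the discrete skeletons. Realising $(M_t^n,\hat M_t^{\epsilon,n})$ as $(U_t^n,\hat U_t^{\epsilon,n})$ with $U_t^n=\sup_{0\le k\le n}(X_{kt/n}^{\epsilon}+\hat B_{T_k})$ and $\hat U_t^{\epsilon,n}=\sup_{0\le k\le n}(X_{kt/n}^{\epsilon}+\hat B_{T_k^{\epsilon}})$ as in the proof of Proposition~\ref{oeM}, I cannot assert a locally bounded density for $U_t^n=^dM_t^n$ uniformly in $n$. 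The fix is to run the proof of Lemma~\ref{probX-Y} but to pass to the limit before invoking the density: one has $|\P[U_t^n\ge x]-\P[\hat U_t^{\epsilon,n}\ge x]|\le\P[x\le U_t^n<x+\delta]+\P[x-\delta\le U_t^n<x]+2\P[|U_t^n-\hat U_t^{\epsilon,n}|>\delta]$, where $|U_t^n-\hat U_t^{\epsilon,n}|\le\sup_{1\le k\le n}|\hat B_{T_k}-\hat B_{T_k^{\epsilon}}|$. Since $U_t^n=^dM_t^n$ and $M_t^n\to M_t$ in law, the Portmanteau theorem bounds $\limsup_n\P[U_t^n\in[x,x+\delta]]\le\P[M_t\in[x,x+\delta]]\le K_x\delta$ using the density of $M_t$ (and likewise for $[x-\delta,x]$), while the coupling term is $\le\delta^{-p}\EE\sup_{1\le k\le n}|\hat B_{T_k}-\hat B_{T_k^{\epsilon}}|^{p}$, controlled by part~3 of Theorem~\ref{plongement}. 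Because $M_t^n\to M_t$ and $\hat M_t^{\epsilon,n}\to\hat M_t^{\epsilon}$ almost surely and both limit laws are continuous at $x>0$ (the density of $M_t$ exists there and $\hat M_t^{\epsilon}$ has no atom since $\sigma(\epsilon)>0$ gives $\hat X^{\epsilon}$ a Gaussian component), the left-hand side converges to $|\P[M_t\ge x]-\P[\hat M_t^{\epsilon}\ge x]|$. Taking $\limsup_n$ and then optimising over $\delta$ with $p=1/q-1$ exactly as in part~2 gives the claimed bound.
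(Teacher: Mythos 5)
Your proposal is correct and follows essentially the same route as the paper: couple $(R^{\epsilon},\sigma(\epsilon)\hat W)$ through the Skorokhod embedding, use the bounded density of the Gaussian component $bB_t$ together with part 1 of Theorem~\ref{plongement} for the uniform bound, and for parts 2 and 3 combine Lemma~\ref{probX-Y} (applied to the coupled variables, resp.\ its proof applied to the discrete suprema $U_t^n,\hat U_t^{\epsilon,n}$ with a passage to the limit before invoking the density of $M_t$) with part 3 of Theorem~\ref{plongement}, optimizing $\delta=\sigma_0(\epsilon)^{\frac{p}{p+1}}\beta_{p,\theta}^t(\epsilon)^{\frac{1}{p+1}}$ and substituting $p=1/q-1$. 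Your part 1 merely repackages the paper's conditioning argument as a Lipschitz function $F(v)=\P\left[bB_t\geq x-v\right]$ fed into the proof of Proposition~\ref{ctex}, and your Portmanteau justification in part 3 is in fact slightly more careful than the paper's ``letting $n\rightarrow\infty$''.
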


\begin{proof}
Recall that $R_t^{\epsilon}=^d \hat{B}_{T_n}$ and $\sigma(\epsilon)\hat{W}_t=^d \hat{B}_{T_n^{\epsilon}}$.
Set
\begin{eqnarray*}
Y_t=X_t^{\epsilon}+\hat{B}_{T_n}, \ \hat{Y}_t^{\epsilon}=X_t^{\epsilon}+\hat{B}_{T_n^{\epsilon}}.
\end{eqnarray*}
Thus
\begin{eqnarray*}
\left|\P\left[X_t\geq x\right]-\P\left[\hat{X}_t^{\epsilon}\geq x\right]\right|&=&\left|\P\left[Y_t\geq x\right]-\P\left[\hat{Y}_t^{\epsilon}\geq x\right]\right|
\\&=&\left|\P\left[Y_t\geq x,\hat{Y}_t^{\epsilon}<x\right]-\P\left[Y_t<x,\hat{Y}_t^{\epsilon}\geq x\right]\right|.
\end{eqnarray*}
It holds that
\begin{eqnarray*}
\P\left[Y_t\geq x,\hat{Y}_t^{\epsilon}<x\right]&=&\P\left[x-\left(Y_t-\hat{Y}_t^{\epsilon}\right)\leq \hat{Y}_t^{\epsilon}<x\right]\\
&=&\P\left[x-\left(\hat{B}_{T_n}-\hat{B}_{T_n^{\epsilon}}\right)\leq b B_t+\left(\hat{Y}_t^{\epsilon}-b B_t\right)<x\right].
\end{eqnarray*}
By construction, $b B_t$ is independent of $\left(\hat{Y}_t^{\epsilon}-b B_t\right)$ and of $\left(\hat{B}_{T_n}-\hat{B}_{T_n^{\epsilon}}\right)$. Further, $\frac{1}{b\sqrt{2\pi t}}$ is an upper bound of the probability density function of $b B_t$. By conditioning on the pair $\left(\hat{B}_{T_n}-\hat{B}_{T_n^{\epsilon}},\hat{Y}_t^{\epsilon}-b B_t\right)$, it can thus be concluded that
\begin{eqnarray*}
\P\left[Y_t\geq x,\hat{Y}_t^{\epsilon}<x\right]&\leq&\frac{1}{b\sqrt{2\pi t}}\EE\left|\hat{B}_{T_n}-\hat{B}_{T_n^{\epsilon}}\right|.
\end{eqnarray*}
Analogously, it also holds that
\begin{eqnarray*}
\P\left[Y_t< x,\hat{Y}_t^{\epsilon}\geq x\right]&\leq&\frac{1}{b\sqrt{2\pi t}}\EE\left|\hat{B}_{T_n}-\hat{B}_{T_n^{\epsilon}}\right|.
\end{eqnarray*}
We get the first part of the proposition by using Theorem~\ref{plongement}. 

We now prove the second part of the proposition. Let $p\geq1$. By Lemma~\ref{probX-Y}, there exists $K_{x,t}>0$ such that, for any $\delta>0$,
\begin{eqnarray*}
\left|\P\left[Y_t\geq x\right]-\P\left[\hat{Y}_t^{\epsilon}\geq x\right]\right|
&\leq& K_{x,t}\delta+\frac{\EE\left|Y_t-\hat{Y}_t^{\epsilon}\right|^p}{\delta^p}\\
&=&K_{x,t}\delta+\frac{\EE\left|\hat{B}_{T_n}-\hat{B}_{T_n^{\epsilon}}\right|^p}{\delta^p}.
\end{eqnarray*}
Hence, given $\theta\in(0,1)$, by Theorem~\ref{plongement} there exists a constant $C_{p,\theta,t}>0$ such that
\begin{eqnarray*}
\left|\P\left[Y_t\geq x\right]-\P\left[\hat{Y}_t^{\epsilon}\geq x\right]\right|
&\leq&K_{x,t}\delta+C_{p,\theta,t}\frac{\sigma_0(\epsilon)^p\beta_{p,\theta}^t(\epsilon)}{\delta^p}.
\end{eqnarray*}
Choosing $\delta=\sigma_0(\epsilon)^{\frac{p}{p+1}}\beta_{p,\theta}^t(\epsilon)^{\frac{1}{p+1}}$ yields 
\begin{eqnarray*}
\left|\P\left[Y_t\geq x\right]-\P\left[\hat{Y}_t^{\epsilon}\geq x\right]\right|&\leq&2\max\left(K_{x,t},C_{p,\theta,t}\right)\sigma_0(\epsilon)^{\frac{p}{p+1}}\beta_{p,\theta}^t(\epsilon)^{\frac{1}{p+1}}.
\end{eqnarray*}
The result then follows by substituting $p = 1/q - 1$.

For the third part of the proposition, we use the notation of Proposition~\ref{oeM}. Note that
\begin{eqnarray*}
\left|\P\left[M_t\geq x\right]-\P\left[\hat{M}_t^{\epsilon}\geq x\right]\right|&=&\lim_{n\rightarrow\infty}\left|\P\left[M_t^n\geq x\right]-\P\left[\hat{M}_t^{\epsilon,n}\geq x\right]\right|
\\&=&\lim_{n\rightarrow\infty}\left|\P\left[U_t^n\geq x\right]-\P\left[\hat{U}_t^{\epsilon,n}\geq x\right]\right|.
\end{eqnarray*}
Let $p\geq1$ and put $I_{x,\delta} = [x-\delta, x+\delta)$. Using the proof of Lemma~\ref{probX-Y}, we have
\begin{eqnarray*}
\left|\P\left[U_t^n\geq x\right]-\P\left[\hat{U}_t^{\epsilon,n}\geq x\right]\right|&\leq&\P\left[U_t^n\in I_{x,\delta}\right] + \frac{\EE\left|U_t^n-\hat{U}_t^{\epsilon,n}\right|^p}{\delta^p}
\\&\leq&\P\left[M_t^n\in I_{x,\delta}\right] + \frac{\EE\sup_{1\leq k\leq n}\left|\hat{B}_{T_k}-\hat{B}_{T_k^{\epsilon}}\right|^p}{\delta^p},
\end{eqnarray*}
for any $\delta>0$. By the assumption on $M_t$, there exists a constant $K^{'}_{x,t}>0$ such that $\P\left[M_t\in I_{x,\delta}\right]<K^{'}_{x,t}\delta$ for any $\delta>0$. Combined with Theorem~\ref{plongement}, letting $n\rightarrow\infty$ yields
\begin{eqnarray*}
\lim_{n\rightarrow\infty}\left|\P\left[U_t^n\geq x\right]-\P\left[\hat{U}_t^{\epsilon,n}\geq x\right]\right|&\leq&K^{'}_{x,t}\delta+ C_{p,\theta,t}\frac{\sigma_0(\epsilon)^p\beta_{p,\theta}^t(\epsilon)}{\delta^p},
\end{eqnarray*}
for some constant $C_{p,\theta,t}>0$. So as in part 2, the result follows by choosing $\delta=\sigma_0(\epsilon)^{\frac{p}{p+1}}\beta_{p,\theta}^t(\epsilon)^{\frac{1}{p+1}}$.
\end{proof}



\begin{thebibliography}{10}

\bibitem{asmussen-rosinski} \textsc{Asmussen, S. And Rosinski, J.} (2001). Approximations of small jumps of L\'evy processes with a view towards simulation. {\em J. Appl. Probab.}, \textbf{38}, 482-493.

\bibitem{nielsen} \textsc{Barndorff-Nielsen, O. E.} (1997). Normal inverse Gaussian distributions and stochastic volatility modelling. {\em Scand. J. Statistics}. \textbf{24}, 1-13.

\bibitem{bertoin} \textsc{Bertoin, J.} (1996). {\em L\'evy Processes}. Cambridge University Press.

\bibitem{broadie-yamamoto} \textsc{Broadie, M. And Yamamoto, Y.} (2005). A double-exponential fast Gauss transform algorithm for pricing discrete path-dependent options. {\em Operations Research}. \textbf{53}, 764-779.

\bibitem{cgmy} \textsc{Carr, P. P., Geman, H. And Madan D. B. And Yor, M.} (2002). The fine structure of asset returns: An empirical investigation. {\em Journal of Business}. \textbf{75}, 305-332.

\bibitem{chaumont} \textsc{Chaumont, L.} (2011). On the law of the supremum of L\'evy processes. To appear in {\em Ann. Probab.}

\bibitem{cont-tankov} \textsc{Cont, R. And Tankov, P.} (2004). {\em Financial Modelling with Jump Processes}. Chapman \& Hall/CRC Financial Mathematics Series, Boca Raton.

\bibitem{cont-voltchkova05} \textsc{Cont, R. And Voltchkova, E.} (2005). Integro-differential equations for option prices in exponential L\'evy models. {\em Finance Stoch.} \textbf{9}, 299-325.

\bibitem{dia} \textsc{Dia, E. H. A.} (2010). Options exotiques dans les mod\`eles exponentiels de L\'evy. Doctoral thesis, Universit\'e Paris-Est. Available at http://tel.archives-ouvertes.fr/tel-00520583/en/.

\bibitem{dia-lamberton} \textsc{Dia, E. H. A. And Lamberton, D.} (2011). Connecting discrete and continuous lookback or hindsight options in exponential L\'evy models. {\em Advances in Applied Probability}. \textbf{43}, 1136-1165.

\bibitem{eberlein01} \textsc{Eberlein, E.} (2001). Application of generalized hyperbolic L\'evy motions to finance. In {\em L\'evy Processes: Theory and Applications}, eds OE Barndorff-Nielsen, T Mikosch \& S Resnick, Birkh\"auser Verlag, pp. 319-337.

\bibitem{feng-linetsky08} \textsc{Feng, L. And Linetsky, V.} (2008). Pricing discretely monitored barrier options and defaultable bonds in L\'evy process models: a fast Hilbert transform approach. {\em Math. Finance}. \textbf{18}, 337-384.

\bibitem{feng-linetsky} \textsc{Feng, L. And Linetsky, V.} (2009). Computing exponential moments of the discrete maximum of a L\'evy process and lookback Options. {\em Finance Stoch}. \textbf{13}, 501-529.

\bibitem{morris} \textsc{Morris, C. N.} (1982). Natural exponential families with quadratic variance functions. {\em Ann. Statist}. \textbf{10}, 65-80.

\bibitem{petrella-kou} \textsc{Petrella, G. And Kou, S. G.} (2004). Numerical pricing of discrete barrier and lookback options via Laplace transforms. {\em Journal of Computational Finance}. \textbf{8}, 1-37.

\bibitem{rydberg} \textsc{Rydberg, T. H.} (1997). The normal inverse gaussian lévy process: simulation and approximation. {\em Stoch. Models}. \textbf{13}, 887-910.

\bibitem{sato} \textsc{Sato, K.} (1999). {\em L\'evy processes and Infinitely Divisible Distributions}. Cambridge university press.

\bibitem{signahl} \textsc{Signahl, M.} (2003). On error rates in Normal approximations and simulation schemes for Lévy processes. {\em Stoch. Models}.  \textbf{19}, 287-298.

\bibitem{skorokhod} \textsc{Skorokhod, A. V.} (1965). {\em Studies in the Theory of Random Processes}. Addison-Wesley, Reading, Mass. 

\bibitem{tankov} \textsc{Tankov, P.} (2004). L\'evy processes in finance: inverse problems and dependence modelling. Doctoral Thesis, Ecole Polytechnique.
\end{thebibliography}
\end{document}